\newtheorem{theorem}{Theorem}
\theoremstyle{plain}
\newtheorem{assumption}{Assumption}
\newtheorem{definition}{Definition}
\newtheorem{corollary}{Corollary}
\newenvironment{assumption*}
 {\ifnum\value{subassumption}=0 \stepcounter{assumption}\fi\subassumption}
 {\endsubassumption}
\newenvironment{assumption+}[1]
 {\subassumption}
 {\endsubassumption}
\newenvironment{definition*}
 {\ifnum\value{subdefinition}=0 \stepcounter{definition}\fi\subdefinition}
 {\endsubdefinition}
\newenvironment{definition+}[1]
 {\subdefinition}
 {\endsubdefinition}
\newenvironment{theorem*}
 {\ifnum\value{subtheorem}=0 \stepcounter{theorem}\fi\subdefinition}
 {\endsubtheorem}
\newenvironment{theorem+}[1]
 {\subtheorem}
 {\endsubtheorem}
\newenvironment{corollary*}
 {\ifnum\value{subcorollary}=0 \stepcounter{corollary}\fi\subcorollary}
 {\endsubcorollary}
\newenvironment{corollary+}[1]
 {\subcorollary}
 {\endsubcorollary}
\newcommand\blfootnote[1]{%
  \begingroup
  \renewcommand\thefootnote{}\footnote{#1}%
  \addtocounter{footnote}{-1}%
  \endgroup
}
\theoremstyle{definition}
\newtheorem{assump}{Assumption}
\newtheorem{lemma}{Lemma}
\newtheorem{proposition}{Proposition}
\newtheorem*{remark*}{Remark}
\numberwithin{equation}{section}
\begin{document}

\title[Role Models in Roy models]{Role models and revealed gender-specific costs of STEM in an extended Roy model of major choice.}
\author{Marc Henry}\address{The Pennsylvania State University} 
\author{Romuald M\'eango}\address{Oxford University}
\author{Isma\^^22el Mourifi\'e}\address{University of Toronto, Washington University in St. Louis and NBER}\noindent \blfootnote{\scriptsize{The first version is of August 21, 2019. The present version is of July 14, 2023. This research was supported by SSHRC Grant 435-2018-1273 and Leibniz Association Grant SAW-2012-ifo-3. The research was conducted in part, while Isma\^^22el Mourifi\'e was visiting the Becker Friedman Institute at the University of Chicago. He thanks his hosts for their hospitality and support. The authors thank the editor Elie Tamer, the associate editor and a referee for very detailed and insightful reports, that were instrumental to significant improvements in the paper (the usual disclaimer applies of course). The authors also acknowledge the exceptional research assistance of Ivan Sidorov, as well as helpful comments from Roy Allen, Samuel Cohen, Chris Dobronyi, Jonathan Eaton, Jim Heckman, Hiro Kasahara, Thomas Russell, Azeem Shaikh and seminar audiences at Cambridge, Chicago, Notre Dame, Rice, Texas A\&M, Vanderbilt and York. Correspondence address: Department of Economics, Max Gluskin House, University of Toronto, 150 St. George St., Toronto, Ontario M5S 3G7, Canada}}

{\scriptsize 

\begin{abstract}
We derive sharp bounds on the non consumption utility component in an extended Roy model of sector selection. We interpret this non consumption utility component as a compensating wage differential. The bounds are derived under the assumption that potential utilities in each sector are (jointly) stochastically monotone with respect to an observed selection shifter. 
The research is motivated by the analysis of women's choice of university major, their under representation in mathematics intensive fields, and the impact of role models on choices and outcomes.
To illustrate our methodology, we investigate the cost of STEM fields with data from a German graduate survey, and using the mother's education level and the proportion of women on the STEM faculty at the time of major choice as selection shifters.
\end{abstract}

}

\maketitle
\thispagestyle{empty}

{\scriptsize \textbf{Keywords}: Roy model, partial
identification, stochastic monotonicity, sharp bounds, sharp testable implications, education sector choice, role models, women in STEM.

\textbf{JEL subject classification}: C31, C34, I21, J24
}




\section*{introduction}

A large body of evidence supports sorting in the labor market as a dominant mechanism in the explanation of the residual gender wage gap. See \cite{Goldin:2021} and references therein. Two salient aspects of sorting in the labor market are the persistent under representation of women in Science, Technology, Engineering and Mathematics (hereafter STEM), which  is perceived to be one of the main drivers of the gender wage gap (see \cite{daymont1984}, \cite{zafar2013}, and \cite{SHB:2019}), and the documented influence of role models on sorting choices and eventual outcomes. A survey of the scholarship on both these aspects can be found in \cite{KG:2017}.

We propose to examine the issue of under representation of women in STEM from the point of view of college major choice, which is a major sorting mechanism for skilled labor, and with the lens of the Roy model.  However, the traditional version of the Roy model cannot accommodate stylized facts relating to major and occupational choice. The literature on major and occupational choice, reviewed in \cite{AAM:2016}, emphasizes the importance of factors beyond expected earnings. This motivates the current theoretical analysis of partial identification of gender-specific non pecuniary drivers of major choice within what is generally known in the literature as the extended Roy model (see \cite{HV:99} for details on model genealogy and attribution). The model allows for observed heterogeneity at the individual level, and unobserved heterogeneity at the educational sector level, to influence choices.
\cite{BKT:2011} and \cite{HM:2011} analyze the extended Roy model and provide competing strategies to identify the non pecuniary driver of choice (see the following subsection for details). We propose a utility-based version of the extended Roy model, that allows the implicit non pecuniary costs of each sector to also depend on potential outcomes, unlike \cite{BKT:2011} and \cite{HM:2011}. We make alternative partially identifying assumptions to those in \cite{BKT:2011} and \cite{HM:2011} and derive sharp bounds on utilities and on the non pecuniary driver of choice. The model allows for multiple interpretations of the non pecuniary component, including job amenities, anticipated lack of support for women in mathematics intensive education and occupations, as well as social conditioning and gender stereotyping of occupations.

We combine an extended Roy model of sorting with shape restrictions induced by the influence of role models.
The observed realized earnings variable~$Y$ is equal to~$Y=Y_1D+Y_0(1-D)$, where~$Y_0$ and~$Y_1$ are unobserved potential earnings in Sector~$0$ (non mathematics intensive college major, hereafter referred to as non-STEM) and Sector~$1$ (mathematics intensive college major, hereafter referred to as STEM), and~$D$ is the observed random sector selection indicator. Individuals choose Sector~$d=0,1,$ when $\mathbb E[u(Y_d,d,W)\vert \mathcal I]>\mathbb E[u(Y_{1-d},1-d,W)\vert \mathcal I]$, where~$u$ is the utility function, $\mathcal I$ is the individual's information set at the time of choice, and~$W$ is an~$\mathcal I$-measurable vector of individual and sector observed variables that influence choice and outcomes. This specification differs in two respects with the extended Roy models in the literature: first we do not impose a tie-breaking mechanism, which is important in case of interval censored observations; second, we do not impose quasi-linearity of utility, and allow varying marginal rates of substitution between consumption and sector amenities. The latter feature is shared with \cite{LP:2021}.

We formalize the influence of role models on choices and eventual outcomes with the assumption that a  subvector~$Z$ of the observable variables~$W:=(X,Z)$ can only affect potential expected utilities in one direction. Precisely, for all~$\nu$ and~$x$, the joint probability that~$\mathbb E[u(Y_1,1,W)\vert\mathcal I]>\nu$ and~$\mathbb E[u(Y_0,0,W)\vert\mathcal I]>\nu$, conditionally on~$X=x$ and~$Z=z$, is monotone non decreasing in~$z$. This shape restriction is inspired by monotone instrumental variables, and related to \cite{MP:2000}, \cite{BGIM:2007} and \cite{MHM:2020}.
We argue that factors that positively impact the formation of cognitive and non cognitive skills, while reducing perceived costs of the STEM sector, are likely to satisfy this shape restriction.
Maternal educational attainment and the proportion of female role models (faculty, alumni, invited speakers) are prime candidates (see \cite{BGME:2018} and \cite{RCM:2014}).

The objective of the analysis is to characterize the collection of utility functions that rationalize observed choices. We will refer to this set as the {\em identified set} (also known as sharp identified region in the partial identification literature). We also derive a closed form expression for the minimum perceived cost of mathematics intensive education and activity to rationalize the observed under representation of women in that sector. We interpret this cost as a wage differential that compensates for the real or perceived disamenities of the STEM sector for women. Both the identified set and the closed form lower bound are amenable to inference using existing methods for stochastic monotonicity tests, conditional moment inequalities and intersection bounds.

\cite{MHM:2020} document rejection of the Roy model of sorting on labor market outcomes for women in the sample of German graduates in the 2005 and 2009 graduating cohorts of the German DZHW Graduate Survey (see \cite{DZHW}). We therefore illustrate our theoretical results with an analysis of women's choice of major within the framework of our extended Roy model and confidence regions for the minimum cost of STEM that rationalizes choices. 
We assume that the proportion of women on the STEM faculty in the individual's region at the time of major choice (as a proxy for the presence of role models and better amenities for women) only affects potential labor market outcomes of women graduates positively. We find significant costs of choosing STEM fields for German women from the former Federal Republic, only when assuming perfect foresight (i.e., perfect knowledge of the potential utilities) when choosing the major. The costs are particularly pronounced for women in the lower quartile of the income distribution and for women whose region had low rates of feminization of the STEM faculty at the time of major choice.


\subsection*{Related Literature}

Early results on identification of a parametric version of the extended Roy model are given in \cite{Vijverberg:93}. Nonparametric identification in extended Roy models is analyzed in \cite{HM:2011} and \cite{BKT:2011}. \cite{BKT:2011} restrict attention to non pecuniary costs that are constant across agents, and \cite{HM:2011} to non pecuniary costs that depend on observables only, and not on potential outcomes. \cite{BKT:2011} identify the pecuniary cost function from a common lower bound on the supports of both potential incomes, or alternatively from the exclusion of a location shifter in the potential outcome equations. \cite{HM:2011} combine an additive decomposition of the agent's expected earnings with some smoothness and a continuous covariate.
\cite{LP:2021} also propose an extended Roy model in structural form. The selection equation in \cite{LP:2021} is the same as in our perfect foresight case, so that implicit non pecuniary costs can also depend on potential outcomes. However, identification is achieved from the variation in a utility shifter that is excluded from the potential outcome equations. The latter which may be hard to find and justify in some empirical contexts, such as the one we entertain here.

The inspiration for the shape restriction that drives partial identification goes back at least to \cite{MP:2000}. More recently, \cite{MHM:2020} derive sharp testable implications of stochastic monotonicity constraints in the traditional Roy model. In a related earlier contribution, \cite{Chetverikov:2013} shows that regression monotonicity is the testable implication of
the monotone treatment response assumption and monotone treatment selection assumption
introduced in \cite{MP:2000}. 

The empirical illustration of our proposed methodology fits into a growing applied literature
on the causes of under representation of women in university STEM majors. That literature is already sizable, as evidenced by the surveys in \cite{AAM:2016} and \cite{KG:2017}. The most closely related in terms of methodology is \cite{Saltiel:2018}, which looks at women's college major choice through the lens of a dynamic generalized Roy model inspired by \citeauthor{HHV:2018} [\citeyear{HHV:2016,HHV:2018}]. As concerns empirical findings, our paper relates more specifically to the literature on the role of female professors and gender specific amenities in shaping major choice and educational outcomes. This includes \cite{CR:95}, \cite{BGME:2018}, \cite{CPW:2010}, \cite{CM:2021} on the influence of female professors as role models, and \cite{daymont1984}, \cite{MP:2017}, \cite{KSW:2018}, \citeauthor{WZ:2015} [\citeyear{WZ:2015,WZ:2018}]  and \cite{MHM:2020}
on the importance of gender specific amenity preferences and gender specific response to non pecuniary factors in the valuation and choice of occupations.


\subsection*{Outline} 
The next section presents the extended Roy model and the main identification results. Section~\ref{sec:frame} focuses on the special case of perfect foresight. Section~\ref{sec:inference} discusses inference methods. Section~\ref{sec:sim} presents a simulation study inspired by the under representation of women in STEM fields, and~Section~\ref{sec:appli} illustrates the methodology with an analysis of women's major choices in Germany. The last section concludes. Proofs of the main results are collected in the appendix.




\section{Extended Roy model with role models}
\label{sec:if}

We adopt the framework of the potential
outcomes model $Y=Y_1D+Y_0(1-D).$ Observed outcome~$Y$ has support~$\mathcal Y\subseteq[\underline b,\infty)$, (with~$\underline b\in\mathbb R\cup\{-\infty\}$ and~$\underline b=0$ or~$\underline b=-\infty$ in most cases of interest),
$D$ is an observed selection indicator, which takes value $1$ if Sector~1 is chosen, and $0$ if Sector $0$ is chosen, and $Y_1$, $Y_0$, are
unobserved potential outcomes. In the context of major choice, the outcome of interest will be income in the year following graduation. Sector~1 will consist of all STEM majors and Sector~0, the rest. Decision makers choose their sector of activity based on the realizations of~$Y_0$ and~$Y_1$, and a vector of observed exogenous characteristics~$Z$ with support $\mathcal Z\subseteq\mathbb R^{d_z}$. The vector~$Z$ can be a vector~$Z=(Z_0,Z_1)$ of sector specific cost shifters\footnote{Unlike potential outcomes~$Y_d$, which are only observed in the chosen sector, the sector specific cost shifters are observed in both sectors.}.  In the context of women's major choice, for instance, this could be the proportion of women on the faculty in both sectors in the individual's region or prospective university at the time of choice. The shifter~$Z$ can also be a non sector specific (possibly vector) cost shifter. In the context of women's major choice, this could be the mother's education attainment. Additional observed exogenous covariates will be omitted from the notation. In the context of major choice in Germany, these include gender, visible minority status and a dummy for residence in the former East Germany, as a less affluent region. 

Since $Y,$ $D$ and $Z$ are observed, the distribution~$\pi$ of $(Y,D,Z)$ is directly identified from the data. We call~$\Pi$ the set of admissible data generating processes. Unless otherwise specified, $\Pi$ is the set of probability distributions on $\mathcal Y\times\{0,1\}\times\mathcal Z$.
We summarize the model with the following assumptions.
\begin{assumption}[Potential outcomes]
\label{ass:PO}
Observed outcomes are the realizations of a random variable~$Y$ satisfying $Y=Y_1D+Y_0(1-D)$, where $(Y_0,Y_1)$ is a pair of possibly dependent unobserved random variables with support~$\mathcal Y\subseteq[\underline b,\infty)$, and $D$ is an observed indicator variable.
\end{assumption}

The original Roy model posits sector selection based only on the comparison of potential outcomes, so that~$Y=\max\{Y_0,Y_1\}$. Given our focus on the under representation of women in STEM fields, and rejections of the original Roy model selection rule in~\cite{MHM:2020}, we entertain the possibility that other factors affect sector selection in favor of the non STEM sector. In the context of women's major choices, there might be a gender-specific cost of studying in a STEM field or a gender-specific cost of working in the STEM sector. The former may be the result of the lack of support for female students, or a fear of mathematics carried over from schooling (see \cite{XFS:2015} for a survey of the sociological literature on the subject). The latter may be related to family-friendliness of employment outside STEM.

Hence, in our model, the decision by the individual is based on the comparison between expected utilities~$\mathbb E[u(Y_d,d,Z)\vert \mathcal I]$ for~$d\in\{0,1\}$, where $Y_d$ is the potential outcome (wage) in Sector~$d$ and $Z$ is the (possibly vector) value of the non pecuniary factor that may also affect utility. 
Choices are made using expectations based on the decision maker's information set~$\mathcal I$ at the time of decision. We assume throughout this section that the instrument~$Z$ is~$\mathcal I$-measurable, since it is a vector of variables easily observable at the time of choice. In the context of major choice, the information set involves some knowledge of individual talent for mathematics and non mathematics intensive activities, as well as some anticipation of future labor market conditions and the prices of talent.\footnote{We can allow the information set to contain a vector~$W$ of observed exogenous variables to increase the informativeness of our characterization, without changing the analysis: the characterization in Theorem~\ref{thm:charIF} would then be conditional on~$W$.}

\begin{assumption}[Selection]
\label{ass:if}
The utility function~$u$ on~$\mathcal Y\times\{0,1\}\times\mathcal Z$ is continuous and increasing in its first argument, and such that \[\mathbb E[u(Y_d,d,Z)|\mathcal I]>\mathbb E[u(Y_{1-d},1-d,Z)|\mathcal I]\Rightarrow D=d,\] where the conditional expectations are well defined, $\mathcal I$ is the sigma-algebra characterizing the agent's information set at the time of sector choice, $Z$ and~$D$ are $\mathcal I$-measurable.
\end{assumption}

Sector specific utility~$u(y,d,z)$ can be rationalized by different amenities. Women applicants may prefer Sector~$0$ because the proportion of women in the faculty is larger, hence female specific amenities are better provided. Sector specific utility may also be rationalized with reference dependence (\cite{Thaler:80} and \cite{TK:91}) based on gender profiling: social conditioning makes women prefer Sector~0.

The model characterized by Selection Assumption~\ref{ass:if} is an extended Roy model, in the sense that in each sector, and for each discrete socio-economic group, the utility is a deterministic function of the potential outcome and the non pecuniary vector of variables~$Z$. The utility may depend on unobservables (hence the dependence on Sector~$d$), but there is no unobservable heterogeneity at the individual level, beyond potential outcome~$Y_d$. 




\subsection{Stochastically Monotone Instrumental Variables}
\label{subsec:SMIV-IF}

Without further restrictions, the trivial choice~$u(y,d,z)=y$ (traditional Roy model) would rationalize any data generating process under Assumptions~\ref{ass:PO} and~\ref{ass:if}. Indeed, for any given random vector $(Y,D)$ generating observations, the choice~$Y_0=Y_1:=Y$ trivially satisfies Assumptions~\ref{ass:PO} and~\ref{ass:if}. To restore testability, we introduce a shape restriction on the joint distribution of potential expected utilities~$(\mathbb E[u(Y_0,0,Z)\vert \mathcal I],\mathbb E[u(Y_1,1,Z)\vert \mathcal I])$. 
A traditional approach to restoring testability without parametric restrictions is to allow some observed covariates to affect sector selection only. However, such restrictions are difficult to justify in the context of college major choice. Parental educational attainment is likely to be correlated with unobserved parental cognitive and non cognitive investments in their children (see \cite{Card:2001}). Distance to college and other instruments designed to affect educational attainment choices are not suitable for major choice. Other variables that are very relevant to a woman's major choice, such as the proportion of women on the faculty, are expected to affect potential outcomes for women as well as their choices. We resort instead to a weaker instrumental notion, where the instrument~$Z$ may affect the joint distribution of potential outcomes, but only in one direction, in terms of first order stochastic ordering. 

\begin{assumption}
\label{ass:MIV}
The vector 
$
(U_1,U_2):=(\mathbb E[u(Y_0,0,Z)\vert \mathcal I],\mathbb E[u(Y_1,1,Z)\vert \mathcal I])
$
is such that for each~$\nu\in\mathbb R$, $\mathbb P(U_1\leq \nu,U_2\leq \nu\vert Z=z)$ is monotonically non increasing in~$z$.
\end{assumption}

Assumption~\ref{ass:MIV} holds 
if the vector $(\mathbb E[u(Y_0,0,Z)\vert \mathcal I],\mathbb E[u(Y_1,1,Z)\vert \mathcal I])$ is stochastically monotone\footnote{See Appendix~\ref{app:SM} for a definition.} with respect to~$Z$. The case $\mathbb E[u(Y_d,d,Z)\vert \mathcal I]=\mathbb E[u(Y_d,d,Z)\vert Z]+V_d$, for~$d\in\{0,1\},$ with $(V_0,V_1)\perp Z$, coupled with monotonicity in~$z$ of $\mathbb E[u(Y_d,d,Z)\vert Z=z],$ for $d\in\{0,1\},$ is a special case of Assumption~\ref{ass:MIV}. Assumption~\ref{ass:MIV} also holds if the utility~$u$ is quasi-linear, increasing in~$z$, and the vector of expected potential outcomes $(\mathbb E[Y_0\vert \mathcal I],\mathbb E[Y_1\vert \mathcal I])$ is stochastically monotone with respect to~$Z$.

In the context of women's major choice, expected support and amenities for female students in STEM fields are likely to increase with the presence of female faculty and role models in STEM fields and in the educational level of the student's mother. Hence expected utility is likely to increase with the sector selection variables~$Z$. Combined with the assumption that more female faculty or a higher educational attainment of the mother cannot hurt a woman's earnings prospects, it yields Assumption~\ref{ass:MIV}.




\subsection{Characterization of the set of utility functions that rationalize the data}
\label{subsec:char}

We characterize the set of all utility functions that can rationalize the data under Assumptions~\ref{ass:PO}, \ref{ass:if}, and~\ref{ass:MIV}. This will be the content of Theorem~\ref{thm:charIF}. We start with a formal definition of the set of utility functions that rationalize the data under the model assumptions.\footnote{The identified set exhausts all the information contained in the model. As such, it is also known in the literature as {\em sharp identified region}.}

\begin{definition}[Identified Set]\label{def:idsetIF}
For any $\pi\in\Pi$, we call~$\mathcal U(\pi)$ the collection of functions $u:\mathcal Y\times\{0,1\}\times\mathcal Z\rightarrow\mathbb R$, continuous and increasing in their first argument, such that there exists a random vector $(Y_0,Y_1,D,Z)$ where $((1-D)Y_0+DY_1,D,Z)$ has distribution~$\pi$ and Assumptions~\ref{ass:PO}, \ref{ass:if}, and~\ref{ass:MIV} are satisfied for some~$\sigma$-algebra~$\mathcal I$ such that~$Z$ is~$\mathcal I$-measurable.
\end{definition}

We are interested in characterizing the identified set~$\mathcal U(\pi)$ with moment inequalities. Under Assumptions~\ref{ass:PO} and~\ref{ass:if}, when~$D=1$, $\mathbb E[u(Y,D,Z)\vert\mathcal I]=\mathbb E[u(Y_1,1,Z)\vert\mathcal I]\geq \mathbb E[u(Y_0,0,Z)\vert\mathcal I]$, and when~$D=0$, $\mathbb E[u(Y,D,Z)\vert\mathcal I]=\mathbb E[u(Y_0,0,Z)\vert\mathcal I]\geq \mathbb E[u(Y_1,1,Z)\vert\mathcal I]$. Hence Assumptions~\ref{ass:PO} and~\ref{ass:if} imply that
\begin{eqnarray}\label{eq:maxIF}
\mathbb E[u(Y,D,Z)\vert\mathcal I]=\max\{\mathbb E[u(Y_0,0,Z)\vert\mathcal I],\mathbb E[u(Y_1,1,Z)\vert\mathcal I]\}.
\end{eqnarray}
This relation is the key to deriving testable implications of restrictions on the joint distribution of potential outcomes. This relation also provides a direct proof of the identification of the cost function under the assumptions of \cite{HM:2011}, as described in Appendix~\ref{app:HM}. 
Equation~(\ref{eq:maxIF}) also implies that~$\mathbb E[u(Y,D,Z)\vert Z=z]$ is monotonically non decreasing in~$z$, which can be expressed as a collection of moment inequalities involving the infinite dimensional parameter~$u$. Conversely, monotonicity of~$\mathbb E[u(Y,D,Z)\vert Z=z]$ with respect to~$z$ is shown to imply that~$u$ can rationalize the data, i.e., that we can construct a vector of potential outcomes~$(Y_0,Y_1)$ such that Assumptions~\ref{ass:PO}, \ref{ass:if}, and~\ref{ass:MIV} are satisfied for some information set~$\mathcal I$. This discussion is formalized in the next theorem (proved in the appendix).

\begin{theorem}[Characterization under Imperfect Foresight]
\label{thm:charIF}
For any $\pi\in\Pi$, the identified set~$\mathcal U(\pi)$ is equal to the set of functions~$u$ on~$\mathcal Y\times\{0,1\}\times\mathcal Z$, continuous and increasing in their first argument, and such that for all ordered pairs $z\geq\tilde z$ on the support of $Z$, 
$\mathbb E(u(Y,D,Z)\vert Z=z)\geq\mathbb E(u(Y,D,Z)\vert Z=\tilde z)\geq\mathbb E(u(\underline b,1-D,Z)\vert Z=\tilde z)$, where $(Y,D,Z)$ is any vector with distribution~$\pi$. 
\end{theorem}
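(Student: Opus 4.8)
The plan is to establish the two inclusions separately, using the max-representation~(\ref{eq:maxIF}) as the main engine.

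For the necessity direction (any $C\in\mathcal{C}^i(\pi)$ satisfies the displayed inequality), I would start from a representation $(Y_0,Y_1,D,Z)$ with information set~$\mathcal{I}$ satisfying Assumptions~\ref{ass:PO}, \ref{ass:if} and~\ref{ass:MIV}. Equation~(\ref{eq:maxIF}) writes $\mathbb{E}[Y-DC(Y,Z)\mid\mathcal{I}]$ as the pointwise maximum of the two coordinates of the vector in Assumption~\ref{ass:MIV}. Since $(x_0,x_1)\mapsto\max\{x_0,x_1\}$ is nondecreasing in each argument, and stochastic monotonicity is preserved under coordinatewise nondecreasing maps, Assumption~\ref{ass:MIV} (through its consequence Assumption~\ref{ass:maxMIV}) yields that $\mathbb{E}[Y-DC(Y,Z)\mid\mathcal{I}]$ is stochastically monotone nondecreasing in~$Z$. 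Taking the identity—a nondecreasing function—in Definition~\ref{def:S-Mon}, and using that $Z$ is $\mathcal{I}$-measurable so that $\mathbb{E}[\,\mathbb{E}[Y-DC(Y,Z)\mid\mathcal{I}]\mid Z]=\mathbb{E}[Y-DC(Y,Z)\mid Z]$ by the tower property, delivers $\mathbb{E}[Y-DC(Y,Z)\mid Z=z]\geq\mathbb{E}[Y-DC(Y,Z)\mid Z=\tilde z]$ for $z\geq\tilde z$. The requirement that $y\mapsto y-C(y,z)$ be continuous and increasing is part of Assumption~\ref{ass:if} and transfers verbatim.

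For sufficiency I would construct a representation at indifference. Given $\pi$ and a candidate~$C$ with the stated shape, set $Y_0:=Y-DC(Y,Z)$; on $\{D=0\}$ define the counterfactual $Y_1:=g_Z^{-1}(Y)$, where $g_z(y):=y-C(y,z)$, and on $\{D=1\}$ keep $Y_1:=Y$. Because $g_z$ is continuous and strictly increasing, $g_z^{-1}$ is well defined on the relevant range, and by construction $Y_1-C(Y_1,Z)=Y-DC(Y,Z)=Y_0$ holds identically, so $(1-D)Y_0+DY_1=Y$ and $((1-D)Y_0+DY_1,D,Z)$ has law~$\pi$. Taking $\mathcal{I}=\sigma(Z)$, both conditional expectations in Assumption~\ref{ass:MIV} collapse to the single function $z\mapsto\mathbb{E}[Y-DC(Y,Z)\mid Z=z]$, the selection rule of Assumption~\ref{ass:if} holds with equality (the agent is everywhere indifferent, so no strict inequality is triggered), and~(\ref{eq:maxIF}) holds trivially with both sides equal to $\mathbb{E}[Y-DC(Y,Z)\mid Z]$.

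The crux is the verification of Assumption~\ref{ass:MIV} for this construction. The displayed hypothesis is only a \emph{scalar mean} monotonicity, whereas Assumption~\ref{ass:MIV} demands \emph{vector stochastic} monotonicity; the indifference device is exactly what bridges the gap, since it forces the vector onto the diagonal $(\mathbb{E}[Y-DC\mid Z],\mathbb{E}[Y-DC\mid Z])$, whose conditional law given $Z=z$ is a point mass, so that first order stochastic dominance between $z\geq\tilde z$ reduces precisely to $\mathbb{E}[Y-DC\mid Z=z]\geq\mathbb{E}[Y-DC\mid Z=\tilde z]$. I expect the main technical care to lie here and in the attendant admissibility checks: that $g_z^{-1}(Y)$ lands in~$\mathcal{Y}$ (using $g_z(\underline b)=\underline b-C(\underline b,z)\leq\underline b\leq Y$ and monotonicity of $g_z^{-1}$), and that the indifference representation is a legitimate instance of the model—i.e. that pairing $\mathcal{I}=\sigma(Z)$ with the observed~$D$ is compatible with Assumption~\ref{ass:if}, which again follows from the universal equality $Y_0=Y_1-C(Y_1,Z)$.
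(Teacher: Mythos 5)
Your proposal is correct and follows essentially the same route as the paper's proof: necessity via the max-representation~(\ref{eq:maxIF}), Assumption~\ref{ass:MIV}, and iterated expectations using the $\mathcal I$-measurability of~$Z$; sufficiency via the diagonal construction $Y_0:=Y-DC(Y,Z)$, $Y_1:=g^{-1}(Y-DC(Y,Z),Z)$, with selection holding vacuously at indifference. If anything, you are more explicit than the paper where it is terse: you spell out the choice $\mathcal I=\sigma(Z)$ and why first order stochastic dominance of the resulting degenerate conditional laws reduces exactly to the scalar mean inequality in the theorem's statement.
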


Theorem~\ref{thm:charIF} shows that the identified set of Definition~\ref{def:idsetIF} is characterized by monotonicity of conditional expected realized utilities with respect to~$Z$. A confidence region for the true utility can be obtained by collecting all the utility functions such that the hypothesis of monotonicity of~$\mathbb E[u(Y,D,Z)\vert Z=z]$ is not rejected. We provide two applications of Theorem~\ref{thm:charIF} to operationalize it in our empirical setting. First we propose a direct application to the case, where the utility is parametric (i.e., known up to a finite dimensional parameter vector). Then, we provide closed form expressions for the sharp bounds on the implied gender specific cost of STEM, under additional shape restrictions.


\subsubsection*{Parametric utility function}

The most direct application of Theorem~\ref{thm:charIF} is to the case of parametric utility~$u(y,d,z;\theta)$ for a finite dimensional parameter~$\theta\in\Theta\subseteq\mathbb R^{l_\theta}$. In that case, the identified set
$\Theta_I(\pi):=\{\theta\in\Theta:\; u(\cdot,\cdot,\cdot;\theta)\in\mathcal U(\pi)\}$
is characterized by monotonicity of~$\mathbb E[u(Y,D,Z;\theta)\vert Z=z]$ with respect to~$z$. Hence, a confidence region for the true value~$\theta_0$ of the parameter
can be obtained by collecting all~$\theta\in\Theta$ such that the hypothesis of monotonicity of~$\mathbb E[u(Y,D,Z;\theta)\vert Z=z]$ with respect to~$z$ is not rejected.


\subsubsection*{Bounds on the cost function}
We now seek to characterize a minimal cost function that rationalizes the data under the imperfect foresight Roy model in closed form. We target our model specifically to explain under representation of women in STEM with a specific burden perceived by women in STEM education and STEM professional activities. To that end, we simplify our utility model to allow for an effect of our instrument in the STEM sector only. This is particularly relevant in case the instrument measures the influence of role models in STEM only. We therefore work on a restricted domain to obtain bounds in closed form.

\begin{definition}
\label{def:Cif} Call~$\tilde{\mathcal Z}$ the subset of~$\mathcal Z$ such that $u(y,1,z)\leq u(y,0,z)=y,$ all~$(y,z)\in\mathcal Y\times\tilde{\mathcal Z}$.
\end{definition}

In the rest of this section, we assume that the set~$\tilde{\mathcal  Z}$ of Definition~\ref{def:Cif} is non empty. We start from Theorem~\ref{thm:charIF}, which characterizes cost functions that rationalize the data with the monotonicity in~$z$ of the conditional expectation~$\mathbb E[u(Y,D,Z)\vert Z=z]$. This is equivalent to monotonicity in~$z$ of~$\mathbb E[Y-DC(Y,Z)\vert Z=z]$, with~$C(y,z):=y-u(y,1,z)$. Since $C$ is non negative on~$\mathcal Y\times\tilde{\mathcal Z}$, we seek functions~$C$ such that $\mathbb E[Y-DC(Y,Z)\vert Z=z]=\inf\{\mathbb E[Y\vert Z=\tilde z];\,\tilde z\geq z\},$ the lower monotone envelope of~$\mathbb E[Y\vert Z=z]$. 
As shown in the proof of Corollary~\ref{cor:cf}, this yields the moment equality constraint $\mathbb E[C(Y,Z)\vert D=1,Z=z]=\underline C(z)$, where~$\underline C$ is defined in~(\ref{eq:LB}) below. The upper bound~$\bar C(z)$ is obtained using the worst case bound.

\begin{corollary}\label{cor:cf}
Under Assumptions~\ref{ass:PO}, \ref{ass:if}, and~\ref{ass:MIV}, the cost function~$C(y,z)=y-u(y,1,z)$ satisfies~$\underline C(z)\leq\mathbb E[C(Y,Z)\vert D=1,Z=z]\leq \bar C(z)$, where~$\underline C$ and~$\bar C$ are defined for any~$z\in\tilde{\mathcal Z}$ by\footnote{We use the convention~$x/0=+\infty$ for any~$x\geq0$.} 
{\scriptsize
\begin{eqnarray}
\begin{array}{lll}
\label{eq:LB}
\underline C(z) & = & \mathbb P(D=1\vert Z=z)^{-1}\left(\mathbb E[Y\vert Z=z]-\inf_{\tilde z\geq z,\,\tilde z\in\tilde{\mathcal Z}}\mathbb E[Y\vert Z=\tilde z]\right)1\{\mathbb P(D=1\vert Z=z)>0\},\\
\bar C(z) & = & \mathbb P(D=1\vert Z=z)^{-1}\left(\mathbb E[Y\vert Z=z]-\sup_{\tilde z\leq z,\,\tilde z\in\tilde{\mathcal Z}}\mathbb E[Y(1-D)+\underline bD\vert Z=\tilde z]\right).
\end{array}
\end{eqnarray}
}
\end{corollary}

Corollary~\ref{cor:cf} yields closed form bounds on the cost function in case of quasi-linear utility~$u(y,1,z)=y-C(z)$. Corollary~\ref{cor:cf} also yields implications on the testability of the extended Roy model with imperfect foresight. If~$\pi$ is such that~$\mathbb P(D=1\vert Z=z)>0$ for all~$z$ on the support of~$Z$, then~$\underline C(z)$ is well defined, and the utility function defined by~$u(y,0,z):=y$ and~$u(y,1,z):=\underline C(z)$ is in~$\mathcal U(\pi)$. However, if~$\pi$ is such that $\mathbb P(D=1\vert Z=z)=\mathbb P(D=1\vert Z=\tilde z)=0$ and $\mathbb E[Y\vert Z=z]<\mathbb E[Y\vert Z=\tilde z]$ for $z\geq\tilde z$ on the support of~$Z$, then Assumptions~\ref{ass:PO}, \ref{ass:if} and~\ref{ass:MIV} are jointly rejected.




\section{Extended Roy model with perfect foresight}
\label{sec:frame}

The special case, where potential earnings are measurable with respect to the decision maker's information set has important implications and deserves special treatment. The empirical contents of the imperfect and perfect foresight models differ significantly. In the imperfect foresight case, sharp testable implications of the model specification take the form of conditional mean monotonicity constraints. In the perfect foresight case, they take the form of stochastic monotonicity constraints, and the resulting sharp bounds can therefore be considerably tighter. Another significant difference is the richer structural interpretation of the revealed costs of STEM as a compensating differential  in the perfect foresight case.

\subsection{Behavioral model specification}
\label{subsec:roy}

Although the perfect foresight case is nested within the imperfect foresight one described in Section~\ref{sec:if}, we state assumptions and results in full for greater clarity. The decision by the individual is based on the comparison between utilities~$u(Y_d,d,Z)$ for~$d\in\{0,1\}$, where $Y_d$ is the potential outcome (wage) in Sector~$d$ and $Z$ is the (possibly vector) value of the non pecuniary factor that may also affect utility. 

\begin{assumption+}{\ref{ass:if}$'$}[Selection]
\label{ass:sel}
The utility function~$(y,d,z)\mapsto u(y,d,z)$ on~$\mathcal Y\times\{0,1\}\times\mathcal Z$ is continuous and increasing in its first argument, and satisfies 
\[
u(Y_d,d,Z)>u(Y_{1-d},1-d,Z)\Rightarrow D=d, \mbox{ for }d\in\{0,1\}.
\]
\end{assumption+}

In case of quasi-linear utility~$u(y,d,z);=y-C(d,z)$, Assumption~\ref{ass:sel} is equivalent to~$Y_d-C(d,Z)>Y_{1-d}-C(1-d,Z)\Rightarrow D=d$, for $d\in\{0,1\},$ which is equivalent to the traditional extended Roy model selection, under perfect foresight, except for the fact that we don't impose any tie-breaking rule. However, in many applications, the non pecuniary cost may depend on potential outcomes, so that considering non quasi-linear utility comparisons is important. For example, we expect the costs incurred by women studying or working in the mathematics intensive sector to be less pronounced for women with higher mathematics ability, hence decreasing in~$Y_1$. In the more general case (beyond quasi-linear utility), Assumption~\ref{ass:sel} is equivalent to
\begin{eqnarray}
\label{eq:sel-C}
Y_1-C(Y_1,Z)>Y_0\Rightarrow D=1, \mbox{ and } Y_1-C(Y_1,Z)<Y_0\Rightarrow D=0,
\end{eqnarray}
with~$C(y,z)=y-u^{-1}(u(y,1,z),0,z)$, where~$u^{-1}$ denotes the inverse with respect to the first argument. Assumption~\ref{ass:sel}, therefore, describes a perfect foresight extended Roy model, where the non pecuniary cost may depend on the potential outcome.\footnote{More generally, the cost function~$C(y,z)=y-u^{-1}(u(y,1,z),0,z)$ depends on potential outcomes unless the constraint~$\partial u(y,1,z)/\partial y=\partial u(u^{-1}(u(y,1,z),0,z),0,z)/\partial y$ holds.}

In the perfect foresight case, the cost function~$C$ can be clearly interpreted as a compensating wage differential. Suppose women perceive inferior amenities in the STEM sector. Call~$(y,z)\mapsto\tilde C(y,z)$ the compensating differential defined by $u(y,1,z)=u(y-\tilde C(y,z),0,z)$. Then $\tilde C(y,z)=y-u^{-1}(u(y,1,z),0,z)=C(y,z)$. Hence, $C(y,z)$ in Assumption~\ref{ass:sel} is a monetary adjustment that makes women, whose talents entitle them to identical (uncompensated) wages in both sectors, indifferent between the two sectors. As defined, it is a willingness to pay for the better amenities of the non-STEM sector, or equivalently, the equivalent variation to a move from non-STEM to STEM.




\subsection{Stochastically Monotone Instrumental Variables}
\label{subsec:SMIV}

As in the general case, without further restrictions, the trivial choice~$u(y,d,z)=y$ (traditional Roy model) would rationalize any data generating process under Assumptions~\ref{ass:PO} and~\ref{ass:sel}. Indeed, for any given random vector $(Y,D)$ generating observations, the choice~$Y_0=Y_1:=Y$ trivially satisfies Assumptions~\ref{ass:PO} and~\ref{ass:sel}. The shape restriction inspired by the influence of role models takes the following form in case of perfect foresight.

\begin{assumption+}{\ref{ass:MIV}$'$}
\label{ass:mMax}
The random vector $(u(Y_0,0,Z),u(Y_1,1,Z))$ is such that for each~$\nu\in\mathbb R$, the quantity~$\mathbb P(u(Y_0,0,Z)\leq \nu,u(Y_1,1,Z)\leq \nu\vert Z=z)$ is monotonically non increasing in~$z$.
\end{assumption+}

Assumption~\ref{ass:mMax} allows dependence of potential outcomes on the instrument. Assumption~\ref{ass:mMax} is weaker than stochastic monotonicity of the vector of potential utilities in~$z$, as defined in Appendix~\ref{app:SM}. Indeed, stochastic monotonicity implies (but is not equivalent to) monotonicity in~$z$ of the quantity~$\mathbb P(u(Y_0,0,Z)\leq \nu_0,u(Y_1,1,Z)\leq \nu_1\vert Z=z)$ for all~$(\nu_0,\nu_1)\in\mathbb R^2$. Assumption~\ref{ass:mMax} only involves constraints for a scalar~$\nu$ and is hence weaker. Assumption~\ref{ass:mMax} is in the same spirit, but is not directly comparable to the monotone instrumental variable (MIV) restriction of \cite{MP:2000}. Unlike MIV, Assumption~\ref{ass:mMax} places restrictions on the joint distribution of potential outcomes, as opposed to the marginals only, and drives our characterization of the model's empirical content in Theorem~\ref{thm:char}. 

In the context of women's major choice, we expect support and amenities for female students in STEM fields to increase with the presence of female faculty and role models in STEM fields and in the educational level of the student's mother. Hence we expect utility to increase with the sector selection variables~$Z$. Combined with the assumption that more female faculty or a higher educational attainment of the mother cannot hurt a woman's earnings prospects, it yields Assumption~\ref{ass:mMax}.




\subsection{Characterization of the set of utility functions that rationalize the data}

We characterize the set of all utility functions that can rationalize the data under Assumptions~\ref{ass:PO}, \ref{ass:sel}, and~\ref{ass:mMax}. This will be the content of Theorem~\ref{thm:char}. We start with a formal definition of the set of utility functions that rationalize the data under the model assumptions.

\begin{definition+}{\ref{def:idsetIF}$'$}[Identified Set under perfect foresight]\label{def:idset}
For any $\pi\in\Pi$, we call~$\mathcal U^\prime(\pi)$ the collection of functions $u:\mathcal Y\times\{0,1\}\times\mathcal Z\rightarrow\mathbb R$, such that there exists a random vector $(Y_0,Y_1,D,Z)$ where $((1-D)Y_0+DY_1,D,Z)$ has distribution~$\pi$ and Assumptions~\ref{ass:PO}, \ref{ass:sel}, and~\ref{ass:mMax} are satisfied.
\end{definition+}

We are interested in characterizing the identified set~$\mathcal U^\prime(\pi)$ with moment inequalities. 
Under Assumptions~\ref{ass:PO} and~\ref{ass:sel}, when~$D=1$, $u(Y,D,Z)=u(Y_1,1,Z)\geq u(Y_0,0,Z)$, and when~$D=0$, $u(Y,D,Z)=u(Y_0,0,Z)\geq u(Y_1,1,Z)$. Hence we have 
\begin{eqnarray}
\label{eq:max}
u(Y,D,Z)=\max\{u(Y_1,1,Z),u(Y_0,0,Z)\}.
\end{eqnarray}
Hence, Assumptions~\ref{ass:PO}, \ref{ass:sel}, and~\ref{ass:mMax} imply that~$\mathbb P(u(Y,D,Z)\leq \nu\vert Z=z)$ is monotonically non increasing in~$z$, for all~$\nu\in\mathbb R$, which is equivalent (by definition) to stochastic monotonicity of~$u(Y,D,Z)$ with respect to~$Z$, and which can be expressed as a collection of moment inequalities involving the infinite dimensional parameter~$u$. Conversely, stochastic monotonicity of~$u(Y,D,Z)$ with respect to~$Z$ is shown to imply that~$u$ can rationalize the data, i.e., that we can construct a vector of potential outcomes~$(Y_0,Y_1)$ such that Assumptions~\ref{ass:PO}, \ref{ass:sel}, and~\ref{ass:mMax} are satisfied. This discussion is formalized in the next theorem (proved in the appendix).

\begin{theorem}[Characterization]\label{thm:char}
For any $\pi\in\Pi$, the identified set~$\mathcal U^\prime(\pi)$ is equal to the collection of functions $u:\mathcal Y\times\{0,1\}\times\mathcal Z\rightarrow\mathbb R$, continuous and increasing in the first argument, such that for any $(Y,D,Z)$ with distribution~$\pi$, all~$\nu\in\mathbb R$ and all pairs $z\geq\tilde z$ in~$\mathcal Z$, 
$u(Y,D,Z)\geq u(\underline b,1-D,Z)$ a.s., and
\begin{eqnarray}\label{eq:char}
\mathbb P(u(Y,D,Z)>\nu\vert Z=z) \, \geq \, \mathbb P(u(Y,D,Z)>\nu\vert Z=\tilde z).
\end{eqnarray}
\end{theorem}

Theorem~\ref{thm:char} shows that the identified set of Definition~\ref{def:idset} is characterized by stochastic monotonicity of~$u(Y,D,Z)$ with respect to~$Z$, which is equivalent to the collection of moment inequalities in~(\ref{eq:char}). A confidence region for the true utility can be obtained by collecting all the utility functions such that the hypothesis of stochastic monotonicity of~$u(Y,D,Z)$ is not rejected. We provide two corollaries to Theorem~\ref{thm:char} to operationalize it in our empirical setting. First we propose a direct application to the case, where the utility is parametric (i.e., known up to a finite dimensional parameter vector). Then, we provide closed form expressions for the sharp bounds on the implied gender specific cost of STEM, under additional shape restrictions.


\subsubsection*{Parametric utility function}

The most direct application of Theorem~\ref{thm:char} is to the case of parametric utility~$u(y,d,z;\theta)$ for a finite dimensional parameter~$\theta\in\Theta\subseteq\mathbb R^{l_\theta}$. In that case, the identified set
$\Theta_I^\prime(\pi):=\{\theta\in\Theta:\; u(\cdot,\cdot,\cdot;\theta)\in\mathcal U^\prime(\pi)\}$
is characterized by the collection of moment inequalities in~(\ref{eq:char}). Moreover, a confidence region for the true value~$\theta_0$ of the parameter
can be obtained by collecting all~$\theta\in\Theta$ such that the hypothesis of stochastic monotonicity of~$u(Y,D,Z;\theta)$ with respect to~$Z$ is not rejected.


\subsubsection*{Closed form expression for the bounds on the cost function}

Assumption~\ref{ass:sel} is equivalent to~(\ref{eq:sel-C}) with cost function~$C(y,z)=y-u^{-1}(u(y,1,z),0,z)$. We derive closed form bounds on the cost function~$C$ under the following shape restrictions on the utility functions.

\begin{definition+}{\ref{def:Cif}$'$}
\label{def:C}
Call~$\tilde{\mathcal Z}^\prime$ the subset of~$\mathcal Z$ such that, for each~$(y,z)\in\mathcal Y\times\tilde{\mathcal Z}^\prime$, (1)~$u(y,0,z)\geq u(y,1,z),$ and (2)~$u(y,0,z)$ is non increasing in~$z$.
\end{definition+}

On~$\tilde{\mathcal Z}^\prime$, which we assume non empty for the remainder of this section, unobserved amenities are superior in Sector~$0$ for all individuals in the socio-economic group of interest. This is empirically relevant in our application, where~$z=(z_0,z_1)$ is the vector of proportions of women on the the faculty in each sector, and~$u(y,d,z)=u(y,z_d)$. Then, Condition~(1) becomes~$u(y,z_1)\leq u(y,z_0)$, which is plausible, since STEM departments have far lower proportions of women on the faculty than non STEM ones. As for Condition~(2), it is satisfied if the utility function has a satiation point (in~$z_d$) below the realized proportions of women on the faculty of non STEM departments.

On~$\tilde{\mathcal Z}^\prime$, we have:
\begin{eqnarray*}
Y & \geq & Y-DC(Y,Z) \\
& = & Y(1-D)+Du^{-1}(u(Y,1,Z),0,z) \\
& = & \left\{ \begin{array}{lll}
u^{-1}(u(Y_1,1,Z),0,z) & \mbox{if} & D=1,\\ Y_0 & \mbox{if} & D=0
\end{array} \right. \\
& \geq & Y_0,
\end{eqnarray*}
where the first inequality and the first equality follow from the definitions of~$C$ and~$\tilde{\mathcal Z}^\prime$, the second equality follows from Assumption~\ref{ass:PO}, and the last inequality follows from Assumption~\ref{ass:sel} and the fact that~$u(y,0,z)$ is increasing in~$y$.
Hence, for each~$y\in\mathcal Y$, we have
\begin{eqnarray*}
\mathbb P(Y\leq y\vert z) \; \leq \; \mathbb P(Y-DC(Y,Z)\leq y\vert z) \; \leq \; \mathbb P(Y\leq y,D=0\vert z)+\mathbb P(D=1\vert z)1\{y\geq\underline b\},
\end{eqnarray*}
where the last inequality uses standard worst-case bounds for the distribution of potential outcome~$Y_0$. 

Now, the middle term is equal to
\begin{eqnarray*}
\mathbb P(Y-DC(Y,Z)\leq y\vert Z=z) & = & \mathbb P(\max\{Y_0,u^{-1}(u(Y_1,1,z),0,z)\}\leq y\vert Z=z) \\
& = & \mathbb P(\max\{u(Y_0,0,z),u(Y_1,1,z)\}\leq u(y,0,z)\vert Z=z) \\
& = & \mathbb P(u(Y_0,0,z)\leq u(y,0,z) \mbox{ and } u(Y_1,1,z)\leq u(y,0,z)\vert Z=z),
\end{eqnarray*}
which is monotone non increasing in~$z\in\tilde{\mathcal Z}^\prime$ for each~$y$ by Assumptions~\ref{ass:mMax}, and is right-continuous in~$y$. 
Hence, the random variable~$Y-DC(Y,Z)$ is stochastically monotone non decreasing with respect to~$Z$. Heuristically, if realized outcomes~$Y$ was stochastically monotone non decreasing with respect to~$Z$, the function~$C=0$ would rationalize the data, and would then be the desired lower bound. In general, the lower bound ~$\underline C$ for the cost function is such that the distribution of~$Y-D\underline C(Y,Z)$ is the monotone lower envelope of the distribution of~$Y$, as defined by the first display in~(\ref{eq:bds}) below. The second display in~(\ref{eq:bds}) defines the upper envelope of the worst case bound in the same way.
\begin{eqnarray}
\label{eq:bds}
\begin{array}{lll}
\underline F(y\vert z) & := & \lim_{\tilde y\downarrow y}\sup\left\{\mathbb P(Y\leq \tilde y\vert \tilde z):\tilde z\in\tilde{\mathcal Z}^\prime,\tilde z\geq z\right\}, \\ \\
\bar F(y\vert z) & := & \lim_{\tilde y\downarrow y} \inf\left\{\mathbb P(Y\leq \tilde y,D=0\vert \tilde z)+\mathbb P(D=1\vert \tilde z)1\{\tilde y\geq\underline b\} : \tilde z\in\tilde{\mathcal Z}^\prime,\tilde z\leq z \right\}.
\end{array}
\end{eqnarray}
Given the stochastic monotonicity of~$Y-DC(Y,Z)$ with respect to~$Z$, we therefore have
\begin{eqnarray}
\label{eq:poitwise}
\underline F(y\vert z) \; \leq \; \mathbb P(Y-DC(Y,Z)\leq y\vert z) \; \leq \; \bar F(y\vert z),
\end{eqnarray}
for all~$(y,z)\in\mathcal Y\times\tilde{\mathcal Z}^\prime,$ whenever Assumptions~\ref{ass:PO}, \ref{ass:sel}, and~\ref{ass:mMax} hold.

Lemma~\ref{lemma:env} below shows that the envelopes defined in display~(\ref{eq:bds}) are themselves cumulative distribution functions, and that they are the bounds of the set of cumulative distribution functions with the desired monotonicity properties.
\begin{lemma}[Monotone Envelope]\label{lemma:env}
The following hold for~$\underline F$ and~$\bar F$ defined in~(\ref{eq:bds}): 
\begin{enumerate}
\item For each~$z\in\tilde{\mathcal Z}^\prime$, $y\mapsto \underline F(y\vert z)$ and~$y\mapsto \bar F(y\vert z)$ are cumulative distribution functions. 
\item If $y\mapsto \tilde F(y\vert z)$ is a cumulative distribution function that satisfies 
\[
\mathbb P(Y\leq y\vert z)\leq\tilde F(y\vert z)\leq \mathbb P(Y\leq y,D=0\vert z)+\mathbb P(D=1\vert z)1\{y\geq\underline b\}\] 
for all~$(y,z)\in\mathcal Y\times\tilde{\mathcal Z}^\prime,$ and $z\mapsto\tilde F(y\vert z)$ is non increasing in~$z$ for all~$y,$ then for all~$(y,z)\in\mathcal Y\times\tilde{\mathcal Z}^\prime,$ 
\[
\underline F(y\vert z)\leq \tilde F(y\vert z)\leq \bar F(y\vert z).
\]
\end{enumerate}
\end{lemma}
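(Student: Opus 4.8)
The plan is to treat both bounds as right-continuous regularizations in~$y$ of a supremum (resp.\ infimum), over the~$z$-comparable values of the instrument, of ordinary conditional-cdf objects, and then to read off the cdf axioms and the envelope property from monotonicity in~$z$ together with right-continuity. First I would fix the shorthand $H(y\vert z):=\sup\{\mathbb P(Y\le y\vert\tilde z):\tilde z\in\mathcal Z,\ \tilde z\ge z\}$ and $K(y\vert z):=\inf\{\mathbb P(Y\le y,D=0\vert\tilde z)+\mathbb P(D=1\vert\tilde z)1\{y\ge\underline b\}:\tilde z\in\mathcal Z,\ \tilde z\le z\}$, so that by~(\ref{eq:bds}) one has $\underline F(\cdot\vert z)=\lim_{\tilde y\downarrow\,\cdot}H(\tilde y\vert z)$ and $\bar F(\cdot\vert z)=\lim_{\tilde y\downarrow\,\cdot}K(\tilde y\vert z)$; that is, $\underline F(\cdot\vert z)$ and $\bar F(\cdot\vert z)$ are precisely the right-continuous versions of $H(\cdot\vert z)$ and $K(\cdot\vert z)$.

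For claim~(1) I would check the four defining properties of a cdf for each fixed~$z$. Monotonicity in~$y$ is inherited from the integrands, since for each fixed~$\tilde z$ both $y\mapsto\mathbb P(Y\le y\vert\tilde z)$ and $y\mapsto\mathbb P(Y\le y,D=0\vert\tilde z)+\mathbb P(D=1\vert\tilde z)1\{y\ge\underline b\}$ are non-decreasing, and a pointwise supremum (resp.\ infimum) of non-decreasing functions is non-decreasing; hence $H(\cdot\vert z)$ and $K(\cdot\vert z)$ are non-decreasing and so are their right-limits. Right-continuity is automatic, because the right-regularization $g^+(y)=\inf_{\tilde y>y}g(\tilde y)$ of any non-decreasing~$g$ is right-continuous. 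For the left tail, the support restriction $\mathcal Y\subseteq[\underline b,\infty)$ forces both integrands to vanish for $y<\underline b$, so $\underline F(y\vert z)=\bar F(y\vert z)=0$ there. For the right tail, $\underline F(y\vert z)$ is squeezed between $\mathbb P(Y\le y\vert z)$ (the choice $\tilde z=z$ in the supremum) and~$1$, giving $\underline F(y\vert z)\to1$; for $\bar F$ I would rewrite each integrand for $y\ge\underline b$ as $1-\mathbb P(Y>y,D=0\vert\tilde z)$ and send $y\to\infty$.

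For claim~(2), fix $(y,z)$ and use the hypotheses one~$\tilde z$ at a time. Whenever $\tilde z\ge z$, the assumed lower bound at~$\tilde z$ together with $\tilde F(y\vert z)\ge\tilde F(y\vert\tilde z)$ (from $z$-monotonicity, as $z\le\tilde z$) yields $\tilde F(y\vert z)\ge\mathbb P(Y\le y\vert\tilde z)$, and taking the supremum over such~$\tilde z$ gives $\tilde F(y\vert z)\ge H(y\vert z)$ for every~$y$. Symmetrically, for $\tilde z\le z$ the upper bound at~$\tilde z$ and $\tilde F(y\vert z)\le\tilde F(y\vert\tilde z)$ give $\tilde F(y\vert z)\le\mathbb P(Y\le y,D=0\vert\tilde z)+\mathbb P(D=1\vert\tilde z)1\{y\ge\underline b\}$, so $\tilde F(y\vert z)\le K(y\vert z)$ for every~$y$. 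I would then pass from $H,K$ to $\underline F,\bar F$ using that $\tilde F(\cdot\vert z)$ is itself a cdf, hence right-continuous: evaluating the pointwise inequalities at $\tilde y$ and letting $\tilde y\downarrow y$, the left sides converge to $\tilde F(y\vert z)$ while the right sides converge to $\underline F(y\vert z)$ and $\bar F(y\vert z)$ by definition, delivering $\underline F(y\vert z)\le\tilde F(y\vert z)\le\bar F(y\vert z)$.

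The step I expect to require the most care is exactly this interplay between the $\lim_{\tilde y\downarrow y}$ regularization and the sup/inf over~$z$: the clean pointwise inequalities hold only for the unregularized $H$ and~$K$, so the conclusions for $\underline F$ and $\bar F$ genuinely rely on the right-continuity of~$\tilde F$ (and the monotonicity in~$y$ of $H$ and~$K$) to survive the limit, and keeping that order of operations straight is the crux of the argument. A secondary subtlety is the right-tail limit $\bar F(y\vert z)\to1$, which amounts to $\sup_{\tilde z\le z}\mathbb P(Y>y,D=0\vert\tilde z)\to0$ as $y\to\infty$; this uses (uniform) tightness of the conditional laws and is automatic, for instance, when $\mathcal Z$ is finite as in the empirical application.
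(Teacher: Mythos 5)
Your proof is correct and follows essentially the same route as the paper's: the paper also writes $\underline F(\cdot\vert z)$ and $\bar F(\cdot\vert z)$ as right-continuous regularizations (its operator~$R$) of the pointwise sup/inf (its operator~$S$) of monotone families of cdfs, verifies the cdf axioms using the common support bound $1\{\cdot\geq\underline b\}$ for the left tail, and proves the envelope property exactly as you do, by combining the pointwise bounds with monotonicity in~$z$ and the right-continuity of~$\tilde F$ before passing to the limit. The one place you go beyond the paper is the right-tail caveat for~$\bar F$: the paper dismisses it with ``treated symmetrically,'' whereas an infimum of cdfs need not tend to~$1$ without uniform tightness over $\{\tilde z\leq z\}$, so your explicit flag identifies a genuine subtlety the paper's own proof glosses over.
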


Equation~(\ref{eq:poitwise}) yields the testable implication for our model that bounds~$\underline F(y\vert z)$ and~$\bar F(y\vert z)$ cannot cross.
Since~$\mathbb P(Y-DC(Y,Z)\leq y\vert z) = \mathbb P(Y-C(Y,Z)\leq y, D=1\vert z) + \mathbb P(Y\leq y, D=0\vert z),$ and since~$\mathbb P(Y-C(Y,Z)\leq y, D=1\vert z)$ is non decreasing in~$y$, (\ref{eq:poitwise}) also yields the bounds
\begin{eqnarray}
\label{eq:pointwise}
\begin{array}{lll}
L(y\vert z) & := & \sup_{\tilde y\leq y}\left\{\underline F(\tilde y\vert z)-\mathbb P(Y\leq \tilde y,D=0\vert z) \right\} \\ \\
& \leq & \mathbb P(Y-C(Y,Z)\leq y, D=1\vert z)  \\ \\
& \leq & \inf_{y\leq \tilde y}\left\{\bar F(\tilde y\vert z)-\mathbb P(Y\leq \tilde y,D=0\vert z) \right\} \; := \; U(y\vert z).
\end{array}
\end{eqnarray}
Finally, we have the following corollary of Theorem~\ref{thm:char}, proved in the Appendix, which establishes bounds on the cost functions that correspond to utilities in the identified set of Definition~\ref{def:idset}.

\begin{corollary}[Bounds on the Cost Function]
\label{cor:pointwise}
Under Assumptions~\ref{ass:PO}, \ref{ass:sel}, and~\ref{ass:mMax}, the cost function~$C(y,z)=y-u^{-1}(u(y,1,z),0,z)$ satisfies for all~$(y,z)\in\mathcal Y\times\tilde{\mathcal Z}^\prime$:
\begin{eqnarray}
\label{eq:CFB}
\underline C(y,z) \; := \; y-L_-\left(F_1(y\vert z)\vert z\right)
\; \leq \; C(y,z)
\; \leq \; y-U^-\left(F_1(y\vert z)\vert z\right) \; := \; \bar C(y,z),
\end{eqnarray}
where~$F_1(y\vert z):=\mathbb P(Y\leq y,D=1\vert Z=z)$ and~$U^-$ and~$L_-$, defined respectively by~$U^-(x|z):=\inf\{y\in\mathbb R:U(y|z)\geq x\}$ and~$L_-(x|z):=\sup\{y\in\mathbb R:L(y|z)\leq x\}$, are generalized inverses  of~$U$ and~$L$ from~(\ref{eq:pointwise}). 
\end{corollary}


\subsubsection*{Sharpness of the closed form bounds}

The bounds of Corollary~\ref{cor:pointwise} may not be attained under the additional shape restrictions on utilities. We now turn to a characterization of assumptions on the cost function~$C(y,z)$ such that the bounds of Corollary~\ref{cor:pointwise} are attained (hence sharp). Since the result below relies on a testable, but high level regularity condition on the data generating process~$\pi$, we later complement it with an iterative procedure to tighten the bounds in case the regularity condition fails to hold.

Consider the following restricted set of data generating processes.

\begin{definition}
\label{def:adj}
Let~$\Pi^r$ be the set of distributions on~$\mathcal Y\times\{0,1\}\times\tilde{\mathcal Z}^\prime$ such that the functions~$F_1(\cdot\vert z)$ from Corollary~\ref{cor:pointwise}, $\underline F(\cdot\vert z)-\mathbb P(Y\leq \cdot, D=0\vert z)$ and~$\bar F(\cdot\vert z)-\mathbb P(Y\leq \cdot, D=0\vert z)$,
where~$\underline F$ and~$\bar F$ are defined in~(\ref{eq:bds}), are continuous and increasing for all~$z\in\tilde{\mathcal Z}^\prime$ and do not cross.
\end{definition}

The bounds of Corollary~\ref{cor:pointwise} are sharp under the regularity assumption of Definition~\ref{def:adj} and under the shape restrictions on the cost function used to derive~(\ref{eq:CFB}) and formalized in the following definition. 
\begin{definition}
\label{def:adj2}
For each~$\pi\in\Pi$, where~$\Pi$ is the set of distributions on~$\mathcal Y\times\{0,1\}\times\tilde{\mathcal Z}^\prime$, let~$\mathcal C(\pi)$ be the set of non negative functions on~$\mathcal Y\times\tilde{\mathcal Z}^\prime$ such that~$y\mapsto y-C(y,z),$ is increasing for all~$z$, $\mathbb P(Y\leq y,D=1\vert Z=z)$ is continuous and increasing in~$y$ for all~$z\in\tilde{\mathcal Z}^\prime$, and $\mathbb P(Y-DC(Y,Z)>y\vert Z=z)$ is monotone non decreasing in~$z$ for all~$y$, for any random vector~$(Y,D,Z)$ distributed according to~$\pi$.
\end{definition}

\begin{proposition}[Sharpness of the closed-form bounds]\label{prop:min}
Assume~$\pi\in\Pi^r$, where~$\Pi^r$ is defined according to Definition~\ref{def:adj}. Let~$\underline C$ and~$\bar C$ be defined in~(\ref{eq:CFB}) and let~$\mathcal C(\pi)$, defined according to Definition~\ref{def:adj2}, be non empty. Then~$\underline C\in\mathcal C(\pi)$, $\bar C\in\mathcal C(\pi)$, and all~$C\in\mathcal C(\pi)$ satisfy~$\underline C\leq C\leq \bar C$.
\end{proposition}

Sharpness of the bounds is understood here in the following way. Assumptions~\ref{ass:PO}, \ref{ass:sel}, \ref{ass:mMax}, and the conditions of Definition~\ref{def:C} imply~$C(y,z):=y-u^{-1}(u(y,1,z),0,z)\in\mathcal C(\pi)$. Moreover, Proposition~\ref{prop:min} shows that the bounds of Corollary~\ref{cor:pointwise} belong to~$\mathcal C(\pi)$. Finally, the utility functions~$\underline u:=y-d\underline C$ and~$\bar u:=y-d\bar C$ satisfy Assumptions~\ref{ass:PO}, \ref{ass:sel}, \ref{ass:mMax}, and the conditions of Definition~\ref{def:C}.

In case the data generating process~$\pi$ fails to satisfy the regularity condition of Definition~\ref{def:adj}, we propose an iterative procedure to sharpen the bounds. Let~$\underline C^{(0)}:= \underline C$, and~$\bar C^{(0)}:=\bar C$, 
as defined in Equation~(\ref{eq:CFB}) of Corollary~\ref{cor:pointwise}. Then, for each $y\in  \mathcal{Y}$,  $z \in \tilde{\mathcal Z}^\prime$, and $n=1,2, \ldots$, define the sequences:
\begin{eqnarray}\label{eq:def_Fn-}
\begin{array}{lll}
\underline F^{(n)}(y\vert z) & := & \lim_{\tilde y\downarrow y}\sup\left\{\mathbb P(Y- D \underline{C}^{(n-1)}(Y,Z) \leq \tilde y\vert \tilde z):\tilde z\in\tilde{\mathcal Z}^\prime,\tilde z\geq z\right\},\\ \\
L^{(n)}(y\vert z) & := & \sup_{\tilde y\leq y}\left\{\underline F^{(n)}(\tilde y\vert z)-\mathbb P(Y\leq \tilde y\vert z) \right\}, \\ \\
\underline C^{(n)}(y,z) \; &:=& \; y-L^{(n)}_{-}\left(F_1(y\vert z)\vert z\right),
\end{array}
\end{eqnarray}
where~$L_-$ is the generalized inverse of~$L$ as defined in Corollary~\ref{cor:pointwise}. 
Symmetrically, for the upper bound, define the sequences
\begin{eqnarray}\label{eq:def_Fn+}
\begin{array}{lll}
\bar F^{(n)}(y\vert z) & := & \lim_{\tilde y\downarrow y}\inf\left\{\mathbb P(Y- D \bar C^{(n-1)}(Y,Z) \leq \tilde y\vert \tilde z):\tilde z\in\tilde{\mathcal Z}^\prime,\tilde z\leq z\right\},\\ \\
U_{(n)}(y\vert z) & := & \inf_{\tilde y\leq y}\left\{\bar F^{(n)}(\tilde y\vert z)-\mathbb P(Y\leq \tilde y\vert z) \right\}, \\ \\
\bar C^{(n)}(y,z) \; &:=& \; y-U_{(n)}^{-}\left(F_1(y\vert z)\vert z\right),
\end{array}
\end{eqnarray}
where~$U^-$ is the generalized inverse of~$U$ as defined in Corollary~\ref{cor:pointwise}.

\begin{proposition}[Sharpness of the iterated bounds]\label{prop:iter}
Let~$\pi\in\Pi$, where~$\Pi$ is the set of distributions on~$\mathcal Y\times\{0,1\}\times\tilde{\mathcal Z}^\prime$. Let~$\mathcal C(\pi)$, defined according to Definition~\ref{def:adj2}, be non empty. The sequences~$(\underline C^{(n)})_n$ and~$(\bar C^{(n)})_n$ 
defined in~(\ref{eq:def_Fn-}) and~(\ref{eq:def_Fn+}) 
admit limits~$\underline C^{(\infty)}$ and~$\bar C^{(\infty)}$ respectively
$\in\mathcal C(\pi)$, $\bar C^{(\infty)}\in\mathcal C(\pi)$, 
and all~$C\in\mathcal C(\pi)$ satisfy~$\underline C^{(\infty)}\leq C\leq \bar C^{(\infty)}$.
\end{proposition}

Inference based on the sharp iterated bounds of Proposition~\ref{prop:iter} poses challenges that are beyond the scope of this paper. In Section~\ref{sec:sim}, we therefore base inference on the bounds of Proposition~\ref{prop:min}, which are only sharp under the additional regularity conditions of Definition~\ref{def:adj2}, but remain valid without them.




\section{Inference}
\label{sec:inference}

This section proposes guidelines for inference procedures based on the identification results of Sections~\ref{sec:if} and~\ref{sec:frame}. All proposed inference is based on an i.i.d. sample of outcomes, decisions, covariates and instruments. In Section~\ref{sec:para}, we discuss parametric inference based on Theorems~\ref{thm:charIF} and~\ref{thm:char}. Confidence regions for utility parameters are obtained from the inversion of tests of regression and stochastic monotonicity in the imperfect and perfect foresight cases respectively. In Section~\ref{sec:npara}, we propose pointwise confidence regions for the implied non pecuniary cost of STEM based on the closed form bounds of Corollaries~\ref{cor:cf} and~\ref{cor:pointwise} for the imperfect and perfect foresight cases respectively. We propose using the intersection bounds methodology of \cite{CLR:2009}.

\subsection{Parametric inference}
\label{sec:para}

In Theorem~\ref{thm:charIF}, we characterize the set of utility functions that rationalize the data under the extended Roy model of Assumptions~\ref{ass:PO}, \ref{ass:if}, and~\ref{ass:MIV}. The characterization is regression monotonicity of realized utilities with respect to the instrument. Parametric inference on utilities can therefore be conducted by inverting a test of regression monotonicity. More precisely, assume utility takes the form~$u(Y,D,Z;\theta)$ for some parameter vector~$\theta\in\Theta$. 
In Section~\ref{sec:sim}, we will consider two standard parameterizations of utility, namely a quasi-linear (QL) and a constant elasticity of substitution (CES) form. In both cases, the non pecuniary component in utility is a disamenity from low representation of women among the faculty in STEM fields. The disamenity from belonging to the minority in STEM is assumed to vanish above a threshold~$\gamma$ (i.e.,~$z\geq\gamma$). For~$\alpha\in\mathbb R$,~$\beta>1$, and~$\gamma\in(0,0.5]$, define
\begin{eqnarray}
\label{eq:QL}
u_{\sc QL}(y,d,z;(\alpha,\beta,\gamma)) & := & y-d\,\alpha \left( 1-\frac{z}{\gamma} \right)^\beta1\{z\leq\gamma\}, \\
\label{eq:CES}
u_{\sc CES}(y,d,z;(\alpha,\beta,\gamma)) & := & \left( y^\frac{\beta-1}{\beta}-d\,\alpha \left( 1-\frac{z}{\gamma}\right)^\frac{\beta-1}{\beta}1\{z\leq\gamma\}\right)_+^\frac{\beta}{\beta-1}. 
\end{eqnarray}

Confidence regions for the true value of the parameter can be obtained by collecting all values of~$\theta=(\alpha,\beta,\gamma)$ such that the hypothesis of regression monotonicity
\begin{eqnarray*}
\mbox{H}_0: \mathbb E[u(Y,D,Z;\theta)\vert Z=z] \mbox{ is non decreasing in }z,
\end{eqnarray*}
is not rejected.
Early tests of regression monotonicity are proposed in \cite{GSvV:2000} and \cite{HH:2000}. \cite{Chetverikov:2013} and \cite{HLS:2016} offer power improvements.

In Theorem~\ref{thm:char}, we characterize the set of utility functions that rationalize the data under the extended Roy model of Assumptions~\ref{ass:PO}, \ref{ass:sel}, and~\ref{ass:mMax}. The characterization is stochastic monotonicity of realized utilities with respect to the instrument. Parametric inference on utilities can therefore be conducted by inverting a test of stochastic monotonicity. More precisely, assume utility takes the form~$u(Y,D,Z;\theta)$ for some parameter vector~$\theta\in\Theta$. For instance,~$u(y,z;\theta)$ takes quasi-linear or CES forms in~(\ref{eq:QL}) or~(\ref{eq:CES}) respectively. Confidence regions for the true value of the parameter can be obtained by collecting all values of~$\theta$ such that a hypothesis of stochastic monotonicity
\begin{eqnarray}
\mbox{H}^\prime_0: \mathbb P[u(Y,D,Z;\theta)\leq u\vert Z=z] \mbox{ is non increasing in }z\mbox{ for all } u,
\label{eq:SM}
\end{eqnarray}
is not rejected.
Early tests of stochastic monotonicity, namely \cite{LLW:2009}, \cite{DE:2012} produce limiting rejection rates equal to nominal size under the least favorable DGP. Hence, they can be conservative. More recent stochastic monotonicity tests provide power improvements via pre-estimation of contact sets, as in \cite{HLS:2016} and \cite{LSW:2018}, using directional differentiability of the least concave majorant operator, as in \cite{Seo:2018}, or adaptivity to smoothness in the conditional cdf, as in \cite{CWK:2021}. 

In what follows, we test both regression monotonicity and stochastic monotonicity using \cite{HLS:2016}.\footnote{\scriptsize We thank Yu-Chin Hsu, Chu-An Liu and Xiaoxia Shi for sharing their code.} The procedure is valid under continuity of the conditional mean~$z\mapsto\mathbb E[u(Y,D,Z;\theta)\vert Z=z]$ and conditional cdf~$z\mapsto\mathbb P(u(Y,D,Z;\theta)\leq u\vert Z=z)$ respectively. The sensitivity of inference results to the generalized moment selection procedure is usually the major concern with this type of procedure, see for instance \cite{CS:2018}\footnote{\scriptsize The generalized moment selection procedure, originally introduced in \cite{Hansen:2005}, \cite{GH:2009} and \cite{AS:2010}, increases the power of moment inequality tests, while controlling size, by pre-selecting inequalities that are close to binding. In the specific implementation of moment inequality testing in \cite{HLS:2016}, the threshold according to which moment inequalities are pre-selected depends on the user-chosen quantities~$\kappa_n$ and~$B_n$.}. We choose the recommended values for the user-chosen parameters governing the generalized moment selection in \cite{HLS:2016}, namely $B_n=0.85\ln n/\ln\ln n$ and $\kappa_n=0.15\ln n$. 

\subsection{Nonparametric bounds}
\label{sec:npara}

This subsection discusses inference based on the closed form bounds of Corollaries~\ref{cor:cf} and~\ref{cor:pointwise} in the imperfect and perfect foresight cases, respectively. The bounds are intersection bounds, and we apply the methodology of Section~4.3 of \cite{CLR:2009}. Implementation and code are taken from \cite{CKLR:2013}.

\subsubsection{Nonparametric bounds under imperfect foresight}

The bounds~(\ref{eq:LB}) of Corollary~\ref{cor:cf} are intersection bounds, and we propose applying the inference procedure proposed in \cite{CLR:2009}. We discuss implementation and applicability for the lower bound~$\underline C$ in~(\ref{eq:LB}). The upper bound is similar. The lower bound~$\underline C$ in~(\ref{eq:LB}) can be written
\begin{eqnarray*}
\underline C(z) & = & \sup_{\tilde z\geq z} \theta(z,\tilde z), 
\end{eqnarray*}
where
\begin{eqnarray*}
\theta(z,\tilde z) & = & \frac{1}{\mathbb P(D=1\vert Z=z)} \left( \mathbb E[Y\vert Z=z] - \mathbb E[Y\vert Z=\tilde z]  \right),
\end{eqnarray*}
using the notation of \cite{CLR:2009} (not to be confused with parameter~$\theta$ in the previous section). The function~$\theta(z,\tilde z)$ is estimated with a twice continuously differentiable kernel and a bandwidth sequence that guarantees undersmoothing, hence asymptotically negligible bias (we use the bandwidth recommendation on page~8 of \cite{CKLR:2013}). Under the latter, 
Bahadur representation~(2) page 1531 of \cite{KLX:2010} implies condition NK page~703 of \cite{CLR:2009} (as shown in Theorem~8 of Appendix~G in the online Supplemental Material). In turn, Condition NK implies validity of the bounds by Theorem~6 page~706 of \cite{CLR:2009}.

\subsubsection{Nonparametric bounds under perfect foresight}

We now propose an inference procedure for the minimal cost function of Corollary~\ref{cor:pointwise}. The upper bound can be treated symmetrically. For any given value of~$z\in\mathcal Z$, we seek a data driven function~$y\mapsto C_n(y,z)$ such that for each~$y\in\mathcal Y$,
\begin{eqnarray}
\label{eq:CR}
\lim_{n\rightarrow\infty}\mathbb P(C_n(y,z)\leq \underline C(y,z))\geq 1-\alpha,
\end{eqnarray}
for some pre-determined level of significance~$\alpha$. Define~$G(y\vert z,\tilde z):=\mathbb P(Y\leq y\vert \tilde z)-\mathbb P(Y\leq y,D=0\vert z)$. Call~$\hat G$ a non parametric estimator for~$G$, and define
\[
\hat G_-(x\vert z,\tilde z):=\sup\left\{ y\in\mathcal Y \; : \; \hat G(y\vert z,\tilde z)\leq x\right\}.
\] 
Finally, let~$\hat F_1$ be a nonparametric estimator for~$F_1(y\vert z):=\mathbb P(Y\leq y,D=1\vert z)$. In practice, we use nonparametric estimation procedures in \cite{LR:2008}. Lemma~\ref{lemma:CLR} (proved in the appendix) shows the applicability of the methodology in \cite{CLR:2009} under Condition~NK page~703.
\begin{lemma}
\label{lemma:CLR}
If the data generating process~$\pi$ is in~$\Pi^r$ of Definition~\ref{def:adj} and if~$G(y\vert z,\tilde z):=\mathbb P(Y\leq y\vert \tilde z)-\mathbb P(Y\leq y,D=0\vert z)$ is continuous 
in~$y$ for all~$z,\tilde z$, we have
\[
\underline C(y,z)\geq y-\inf_{\tilde z\geq z}G_-(F_1(y\vert z)\vert z,\tilde z), \; \mbox{ where } \; G_-(x\vert z,\tilde z):=\sup\{y\in\mathcal Y: G(y\vert z,\tilde z)\leq x\}.
\]
\end{lemma}

Let~$s_n(y;z,\tilde z)$ be a standard error for the estimator~$\hat G_-(\hat F_1(y\vert z)\vert z,\tilde z)$ and~$c^\alpha_n(y;z)$ be the critical value of Definition~3 in \cite{CLR:2009}. Then, under the assumptions of Theorem~6 (including Condition~NK) of \cite{CLR:2009}, 
\[
C_n(y,z):=y-\inf_{\tilde z\leq z}\left\{ \hat G_-(\hat F_1(y\vert z)\vert z,\tilde z) + c^\alpha_n(y;z)s_n(y;z,\tilde z)\right\}
\]
satisfies requirement~(\ref{eq:CR}). Condition~NK page~703 of \cite{CLR:2009} is a complex high-level assumption on the functions~$\tilde z\mapsto G_-(F_1(y\vert z)\vert z,\tilde z)$ and~$\tilde z\mapsto \hat G_-(\hat F_1(y\vert z)\vert z,\tilde z)$. Developing simpler sufficient conditions on the joint distribution of the vector~$(Y,D,Z)$ for Condition~NK to hold is beyond the scope of this paper. 




\section{Simulation study}
\label{sec:sim}

The objective of this simulation study is to illustrate and clarify the empirical content of the main assumptions in our model, namely the stochastic monotonicity assumptions~\ref{ass:MIV} and~\ref{ass:mMax}. We postulate a linear model for potential outcomes.
\begin{eqnarray*}
Y_d &=& c_d Z+(1-\sigma)v_d+\sigma\varepsilon_d,
\end{eqnarray*}
where~$c_d=0.5d+3(1-d)$, $Z$ follows a Beta$(2,5)$ distribution, $\ln (v_0,v_1)$ follows a multivariate normal distribution~N$\left(\left(\begin{array}{c}-2.25\\0\end{array}\right),\left(\begin{array}{cc}1&0.1\\0.1&1\end{array}\right)\right)$, and $\ln\varepsilon_d=2d+\varepsilon$, where~$\varepsilon$ is standard normal random variable. The choice of means reflects an earnings advantage for sector~1, and we let potential incomes be correlated conditionally on~$Z$. The shifter~$Z$ is observed by the agent and the analyst,~$(v_0,v_1)$ is observed by the agent, but not the analyst, and~$\varepsilon_d$ is observed by neither. We use~$\sigma=0$ to model the case, where agents have perfect foresight, and use~$\sigma=0.8$ otherwise. Since~$Y_d$ is an increasing function of~$Z$ plus noise, Assumptions~\ref{ass:MIV} and~\ref{ass:mMax} are satisfied.

We entertain the two parametric models for utility~(\ref{eq:QL}) and~(\ref{eq:CES}) from Section~\ref{sec:para}, with parameter values~$\alpha=1$, $\beta=0.2$ in the quasi-linear case, and~$2.5$ in the CES case, and~$\gamma=1$. The corresponding theoretical cost of choosing Sector~1 is:
\begin{eqnarray}
\label{eq:cost_QL}
C(y,z) & = & \alpha(1-z)^\beta \; \mbox{ in the QL case} \\
\label{eq:cost_CES}
& = & y - \left( y^\frac{\beta-1}{\beta} - \alpha \left( 1-z\right)^\frac{\beta-1}{\beta}\right)_+^\frac{\beta}{\beta-1}
\; \mbox{ in the CES case}.
\end{eqnarray}

Agents choose sector~$d$ that maximizes~$\mathbb E[u(Y_d,d,Z)\vert Z,v_0,v_1]$ as in Assumption~\ref{ass:if}. We consider inference on the non pecuniary cost of Sector~1 under imperfect and perfect foresight respectively.

\subsubsection*{Imperfect foresight}
We first consider the analysis of imperfect foresight agents, assuming imperfect foresight. Nonparametric inference on the costs of Sector~1 under imperfect foresight is based on the bounds in~(\ref{eq:LB}). A positive cost is detected when the conditional mean of realized incomes~$\mathbb E[Y\vert Z=z]$ is not monotonic in~$z$. The non monotonicity of realized incomes~$\mathbb E[Y\vert Z=z]$ is visualized in Figure~\ref{fig:EYZ}. The empirical content of the model is visualized on Figure~\ref{fig:IF_empirical_content}. In the latter, the theoretical cost from Equation~\ref{eq:cost_QL} is plotted together with the theoretical nonparametric lower bound~$\underline C(z)$ from Equation~(\ref{eq:LB}). The performance of the inference procedure can then be assessed in Figure~\ref{fig:IF_inference}, which plots the theoretical lower bound~$\underline C(z)$ from Equation~(\ref{eq:LB}) together with the lower envelope of the~$95\%$ confidence region for each of~$100$ samples of size~$10,000$ each.

\begin{figure}
\centering
\subfigure[QL]{
\includegraphics[scale=0.5]{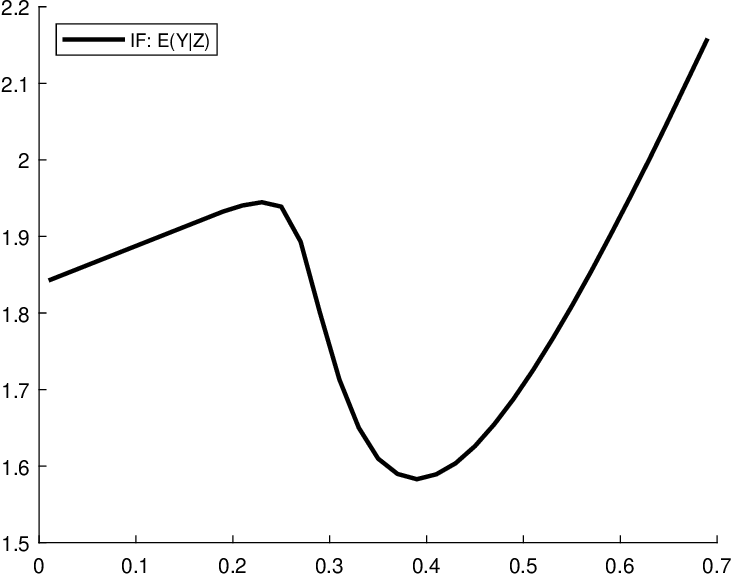}
}
\subfigure[CES]{
\includegraphics[scale=0.5]{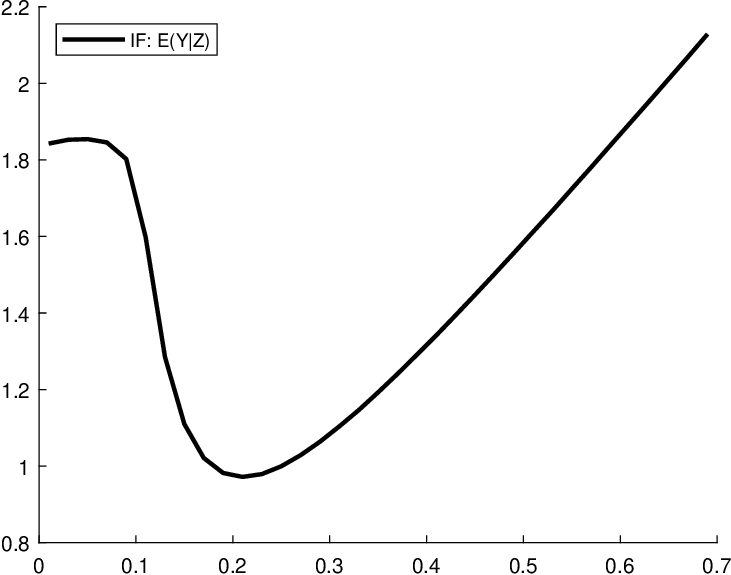}
}
\caption{Imperfect foresight: Non monotonicity of realized incomes~$\mathbb E[Y\vert Z=z]$}
\label{fig:EYZ}
\end{figure}

\begin{figure}
\centering
\subfigure[QL]{
\includegraphics[scale=0.5]{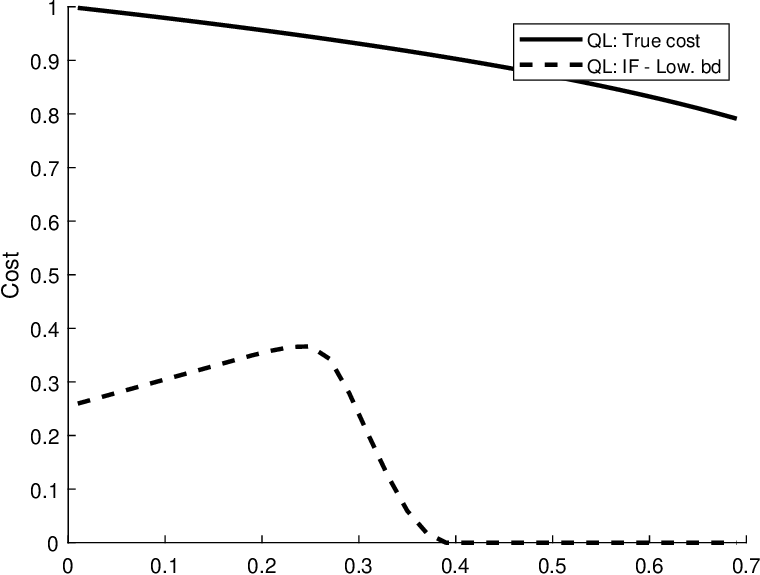}
}
\subfigure[CES]{
\includegraphics[scale=0.5]{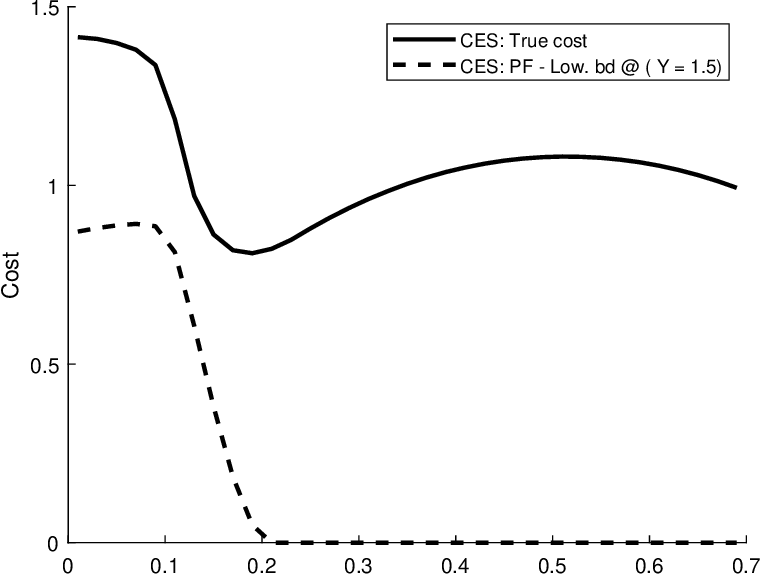}
}
\caption{Imperfect foresight: Empirical content - theoretical cost and theoretical nonparametric lower bound}
\label{fig:IF_empirical_content}
\end{figure}

\begin{figure}
\centering
\subfigure[QL]{
\includegraphics[scale=0.5]{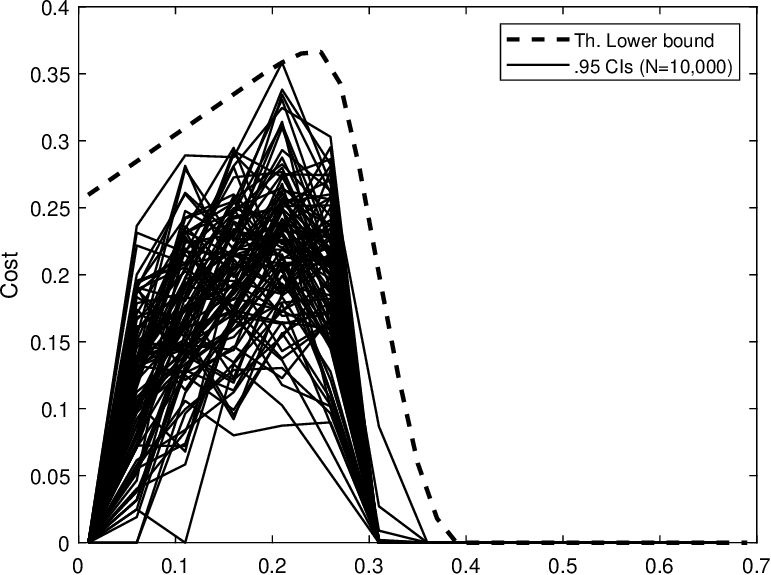}
}
\subfigure[CES]{
\includegraphics[scale=0.5]{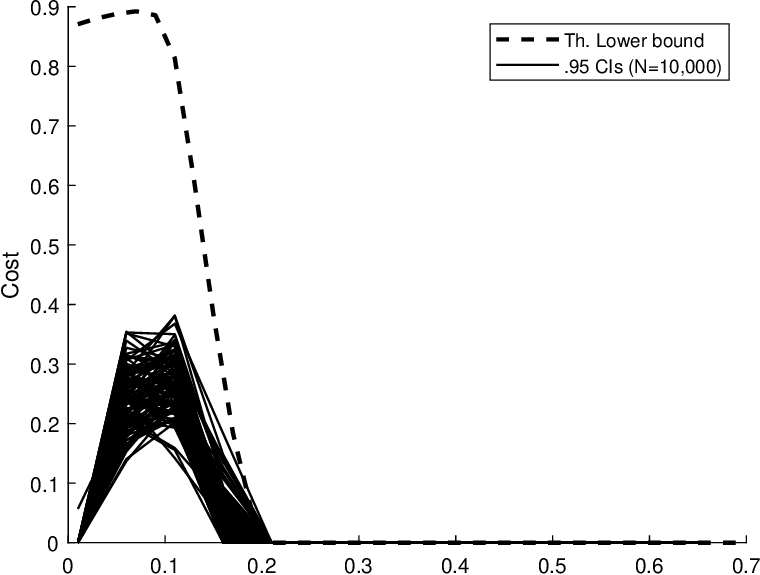}
}
\caption{Imperfect foresight: Lower envelope of the~$95\%$ confidence region}
\label{fig:IF_inference}
\end{figure}

\subsubsection*{Perfect foresight} Here we consider the analysis of perfect foresight agents, correctly assuming perfect foresight. Nonparametric inference on the costs of Sector~1 under foresight is based on the bounds in~(\ref{eq:CFB}). A positive cost is detected when the conditional cdf of realized incomes~$\mathbb P[Y\leq y\vert Z=z]$ is not monotonic in~$z$. The non monotonicity of realized incomes~$\mathbb P[Y\leq y\vert Z=z]$ is visualized in Figure~\ref{fig:FYZ}. The empirical content of the model is visualized on Figure~\ref{fig:PF_empirical_content}. In the latter, the theoretical cost from Equation~\ref{eq:cost_CES} is plotted together with the theoretical nonparametric lower bound~$\underline C(y,z)$ from Equation~(\ref{eq:CFB}). The performance of the inference procedure can then be assessed in Figure~\ref{fig:PF_inference}, which plots the theoretical lower bound~$\underline C(y,z)$ from Equation~(\ref{eq:LB}) together with the lower envelope of the~$95\%$ confidence region for each of~$100$ samples of size~$10,000$ each. Although there is empirical content, as shown in Figure~\ref{fig:PF_empirical_content}, the stochastic monotonicity test we use produces overly conservative inference, as can be seen in Figure~\ref{fig:PF_inference}.

\begin{figure}
\centering
\subfigure[QL]{
\includegraphics[scale=0.5]{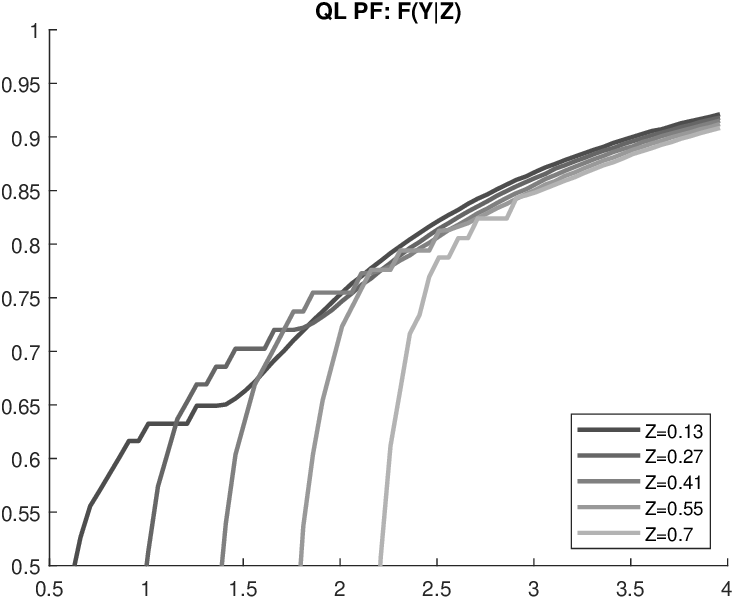}
}
\subfigure[CES]{
\includegraphics[scale=0.5]{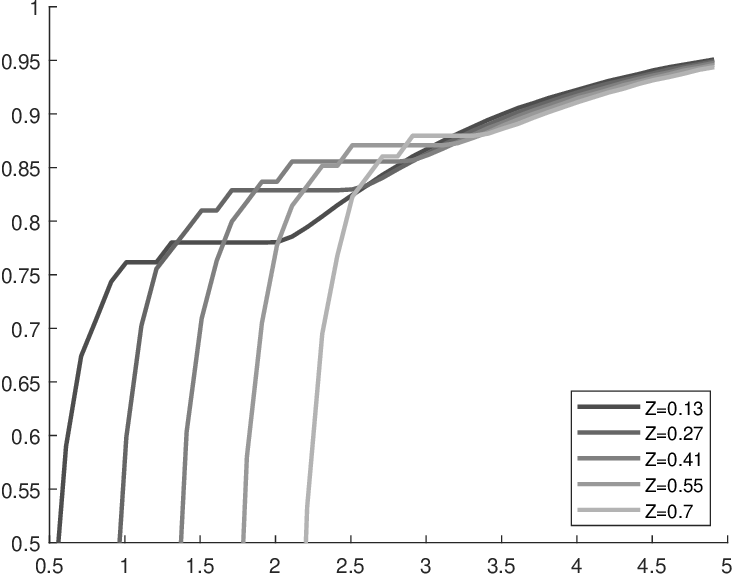}
}
\caption{Perfect foresight: Non monotonicity of realized incomes~$\mathbb P[Y\leq y\vert Z=z]$}
\label{fig:FYZ}
\end{figure}

\begin{figure}
\centering
\subfigure[QL]{
\includegraphics[scale=0.5]{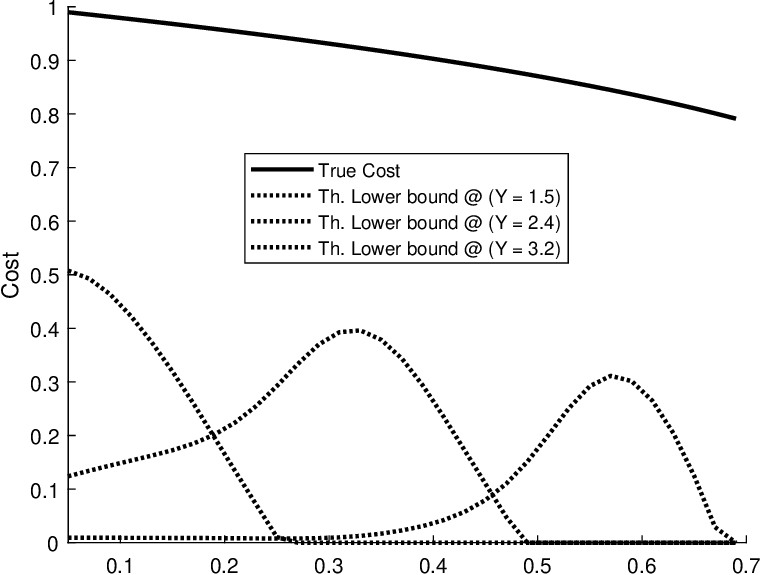}
}
\subfigure[CES]{
\includegraphics[scale=0.5]{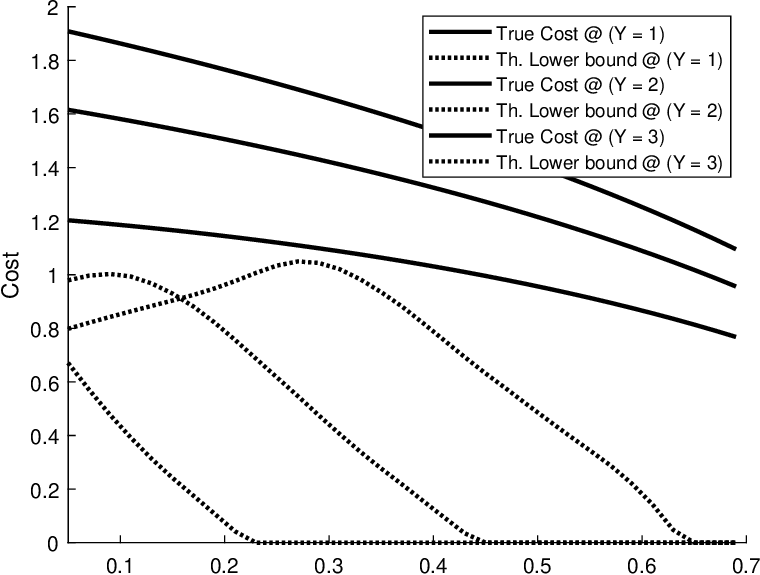}
}
\caption{Perfect foresight: Empirical content - theoretical cost and theoretical nonparametric lower bound}
\label{fig:PF_empirical_content}
\end{figure}

\begin{figure}
\centering
\subfigure[QL]{
\includegraphics[scale=0.5]{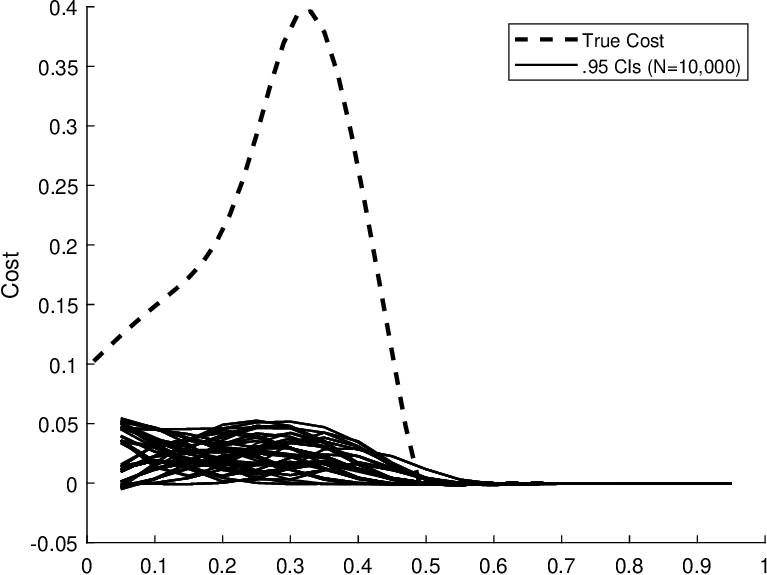}
}
\subfigure[CES]{
\includegraphics[scale=0.5]{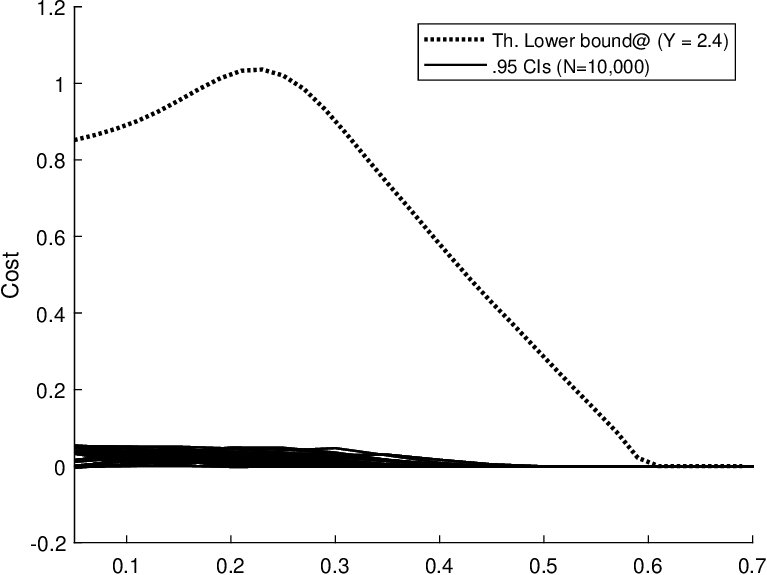}
}
\caption{Perfect foresight: Lower envelope of the~$95\%$ confidence region}
\label{fig:PF_inference}
\end{figure}

\subsubsection*{Misspecification} There is a concern that basing inference on a perfect foresight assumption may reveal spurious non pecuniary costs if agents have imperfect foresight. To evaluate this possibility, we analyze inference on the non pecuniary cost of Sector~1 for imperfect foresight quasi-linear utility agents, incorrectly assuming perfect foresight. Figure~\ref{fig:IF_misspec} shows the theoretical cost from Equation~\ref{eq:cost_QL} as a function of~$z$, together with the theoretical lower bound~$\underline C(y,z)$ from Equation~(\ref{eq:CFB}). Figure~\ref{fig:IF_misspec} shows that~$\underline C(y,z)$ can be larger than the theoretical cost, which indicates spurious cost detection due to the misspecification of the informational environment.

\begin{figure}
\centering
\subfigure[QL]{
\includegraphics[scale=0.5]{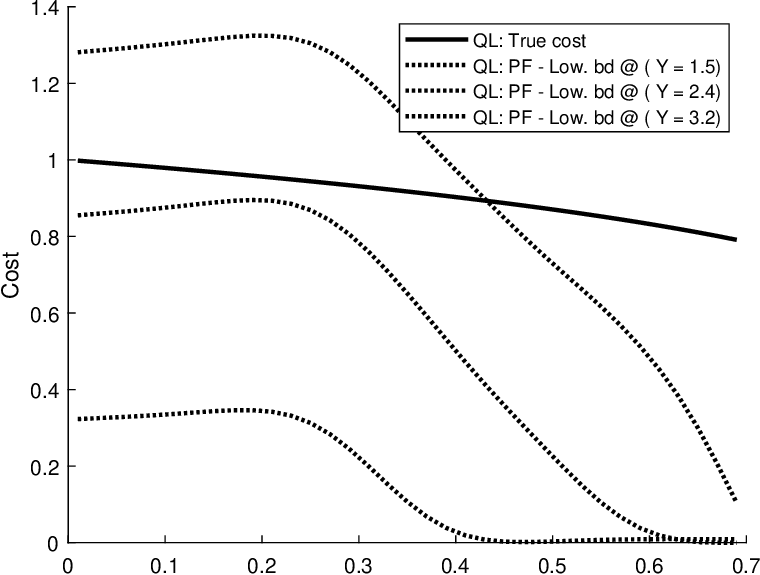}
}
\subfigure[CES]{
\includegraphics[scale=0.5]{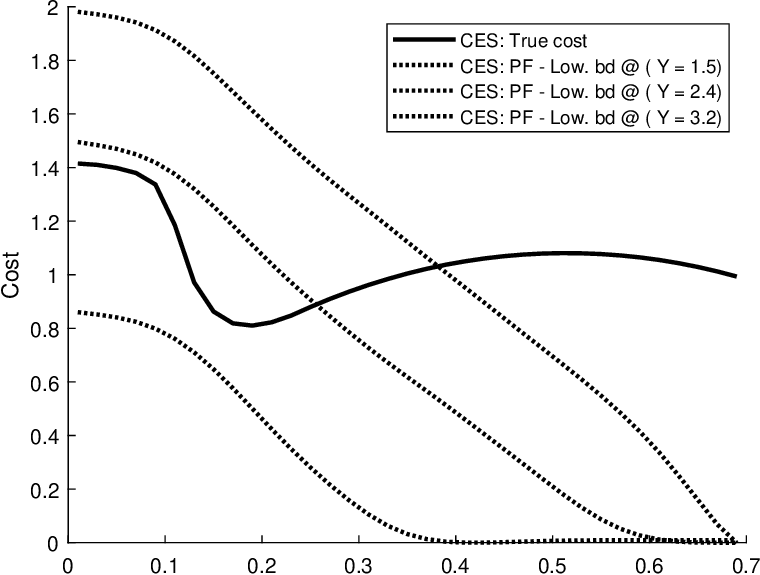}
}
\caption{Lower envelope of the~$95\%$ confidence region}
\label{fig:IF_misspec}
\end{figure}




\section{Empirical illustration: Major choices in Germany}
\label{sec:appli}

There is ample evidence that women are severely under-represented in STEM university majors and even more so in STEM fields (see for instance \cite{beede2011}, \cite{zafar2013} and \cite{HGHM:2013}). Evidence on women's reasons for shunning STEM fields (see for instance the survey in \cite{KG:2017}) include mathematics gender stereotypes and gender biased amenities, such as family friendliness and work/life balance. Our objective is to document the amplitude of such non pecuniary motivations as revealed in the form of a gender-specific cost of choosing STEM fields. The revealed cost of STEM fields is a function of the rate of feminization of the STEM faculty in the region at the time of choice. We therefore also shed light on the importance of role models in the determination of major choice (\cite{KG:2017}).

 
\subsection{Data}
Our empirical analysis relies on surveys of German nationally representative university graduates. The data are collected by the German Centre for Higher Education Research and Science Studies (DZHW) as part of the DZHW Graduate Survey Series. Data and methodology are described in \cite{DZHW}. The waves we consider include graduates who obtained their highest degree during the academic years 2008-2009.
Graduates were interviewed~1 year and~5 years after graduation. At that point, extensive information was collected on their educational experience, employment history, including wages and hours worked, along with detailed socio-economic variables and geographical information about the region where the {\em Abitur} (high school final exam) was completed.
We merge the fields of study into two categories. We call STEM the category, which consists of mathematics, physical, life and computer sciences, as well as engineering and related fields.
The remaining majors are merged in the non-STEM-degree category. We only consider graduates from institutions in the country of the survey, who are active on their respective country's labor market at the time of the interview. 

Our proposed stochastically monotone instrumental variables (SMIV) are the mother's education level and a variable, which we call ``feminization of STEM.'' The latter is defined as the proportion of women among faculty members by field of study in universities in the individual's region (Land) of residence at the time of choice, i.e., the German Land in which the individual graduated from High School, not the Land where the individual attends university. 
This variable is calculated for each individual in the sample from data on gender distribution of faculty by field and by Land provided by the Federal Statistical Office of Germany (DESTATIS). The data set provides for each year between 1998 and 2010, the count of faculty members (Scientific and artistic staff ``{\em Wissenschaftliches und Künstlerisches Personal}'') by gender in ten fields of study. The variable is computed from the aggregation of mathematics, science and engineering. 

The validity of the stochastically monotone instrumental variable rests on a combination of factors. A possible channel is the role model effect: Female students perform better in an environment with more role models. Another channel works through the provision of female specific amenities: Having more women on the faculty is likely to increase the provision of female specific amenities, which will increase women's utility in the sector with a larger proportion of women on the faculty. There is also a selection effect at work, but the latter is more ambiguous. Because of selection, female faculty in STEM may be drawn from a better selected distribution of talent than their male counterparts. Now, more talented women are likely to be more effective role models. However, the selection process we describe also means that if the overall talent distribution is the same in all locations, locations with fewer women on the faculty in STEM would also have a better selected talent distribution. Hence, the effect on role model effectiveness would be the ambiguous result of a trade-off between quality and quantity.

A more serious concern we have (as discussed in our conclusion) is the effect of aggregation into two sectors. The reason it may sometimes lead to violations of monotonicity of potential wages is that within STEM, some sub sectors may have higher wages and lower feminization. This concern is alleviated somewhat by the fact that we do not require wages to be stochastically monotone. The stochastic monotonicity assumption requires that the vector of potential utilities (functions of potential wages and amenities, including feminization) be stochastically monotone with respect to feminization of university faculty. This could be satisfied even in certain cases, where the potential wages themselves are not stochastically monotone.

\subsubsection*{Response rate and attrition}

Several sources of sample selection are of concern. However, given the scarce socio-economic information we have about individuals in the sample, we cannot correct for potential sample selection using matching with external data sets.
First, the response rate was~$20\%$ for the~2009 cohort with~$14\%$ attrition in the second wave.
Second, we exclude women who are still enrolled in higher education ($7.8\%$ of the sample), women who are part-time, self-employed ($6.9\%$ of the sample), and women who are unemployed or out of the labor market ($2\%$ of the sample). We keep only graduates who hold a ``Bachelor'', ``Magister'' or ``Diplom'', excluding those with ``Staatsexamen'' and ``Lehramt'' degrees, which are specific tracks mainly for teachers ($22\%$ of the sample). 
Finally, \cite{Heublein:2014} estimates the drop-out rate for German bachelors degrees to be between $28$ and~$30\%$ during the period of interest. The dropout rate is higher for STEM fields, where the estimates range from~$30$ to~$39\%$. There is some evidence of higher female drop-out rates in other contexts, see \cite{Saltiel:2018} and references therein. However, we were unable to locate a reference with estimates of gender-specific drop-out rates in German universities during the period of interest.


\subsection{Descriptive statistics}
The category of individuals we consider consists of women from the former West Germany\footnote{For the sample of women from the former East Germany, the results (omitted here for space constraints) reveal much lower costs of STEM. The difference in behavior between the former East and West may be partly attributable to differences in gender stereotypes, as evidenced in \cite{LS:2018}}. The variables we consider are average income during the fifth year after graduation, which serves as outcome variable~$Y$,\footnote{This choice is motivated by the available data. Lifetime earnings or wage growth opportunities may be important determinants of choice as well, as suggested by the findings in \cite{MO:2020}. We are grateful to an anonymous referee for bringing the latter reference to our attention.} the choice of major~$D$, which takes value~$1$ if the chosen sector is STEM and~$0$ otherwise, and the feminization of STEM, which serves as our SMIV instrumental variable~$Z$. 
\begin{table}[htbp]
  \centering
  \caption{Sample of women and comparison with men}
    \begin{tabular}{lrr}
    \hline \hline \\
          & Women & Men  \\
\hline \\
STEM    &  299   & 610 \\
Other    & 1,332 & 471 \\  \hline  \hline
    \end{tabular}%
  \label{tab:sample}%
\end{table}%
Table~\ref{tab:sample} gives numbers of STEM majors among women from the former West Germany in 2009, as compared to proportions of STEM majors among men of the same category. 
Figure~\ref{fig:feminisationpdf} shows the distribution of the feminization of STEM variable and the mother's education for cohorts graduating in~$2009$. 
The relation between income, field of study and mother's education or feminization of STEM is investigated with an estimation of the propensity score (probability of choosing STEM) as a function of the feminization of STEM and the mother's education in~2009 in Figure~\ref{fig:income_smiv} and quartile regressions of income as a function of the feminization of STEM and the mother's education for~2009 in Figure~\ref{fig:relation_ydz}. 
Figure~\ref{fig:income_smiv} shows a positive relationship between major choice and feminization of STEM (which may or may not be causal).  

\begin{figure}[hbtp]
\caption{Density of the SMIVs}
\vskip10pt
\centering
\subfigure[Feminization of STEM]{
\includegraphics[scale=0.5]{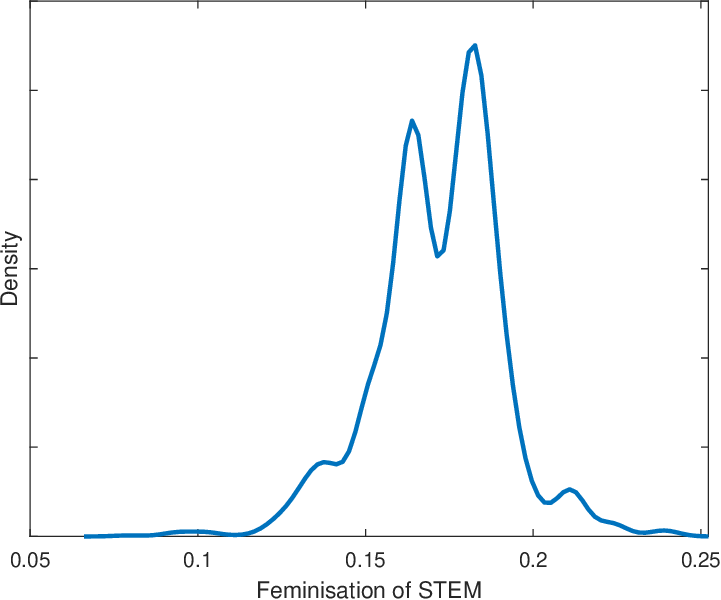}
}
\subfigure[Mother's education]{
\includegraphics[scale=0.5]{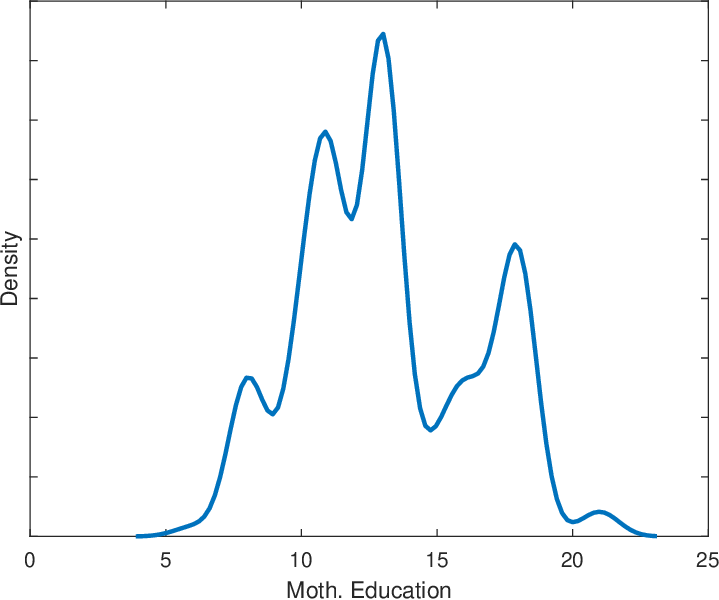}
}
\label{fig:feminisationpdf}
\end{figure}

Note that Figure~\ref{fig:relation_ydz} suggests a potential  violation of stochastic monotonicity of income relative to the feminization of STEM: Stochastic monotonicity of income relative to feminization would imply that the conditional probability of having an income below a given value does not increase with feminization~$(P(Y\leq y\vert Z=z ) \searrow z)$. This is clearly not the case in Figure~\ref{fig:relation_ydz}. This implies a violation of the pure Roy model (without non pecuniary cost) as shown in \cite{MHM:2020}. The violation is confirmed by the more formal test of the  pure Roy model entertained  in \cite{MHM:2020} for this category of individuals\footnote{Note that because of selection, a violation of stochastic monotonicity of realized income relative to feminization does not necessarily imply a violation of stochastic monotonicity of the vector of potential incomes relative to feminization.}.

\begin{figure}[hbtp]
\caption{Relation between the propensity to choose STEM and the SMIV}
\vskip10pt
\centering
\subfigure[Feminization STEM]{
\includegraphics[scale=0.5]{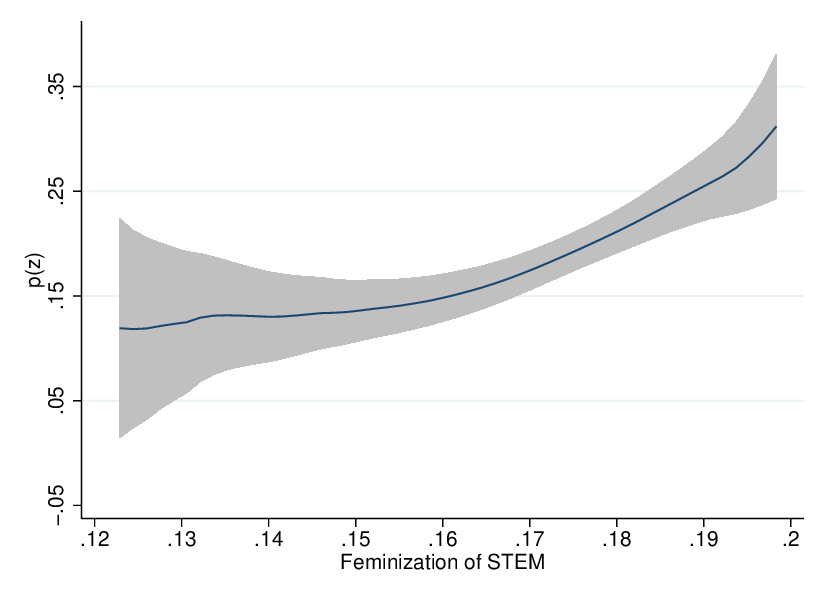} 
}
\subfigure[Mother's education]{
\includegraphics[scale=0.5]{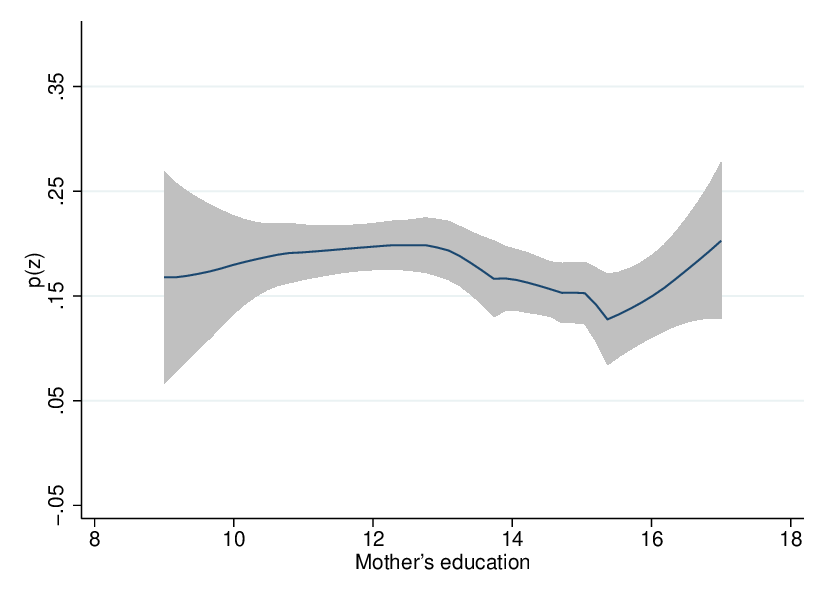} 
}
\label{fig:income_smiv}
\end{figure}

\begin{figure}[hbtp]
\vskip10pt
\caption{Relation between income distribution and the SMIV}
\centering
\subfigure[Feminization STEM]{
\includegraphics[scale=0.5]{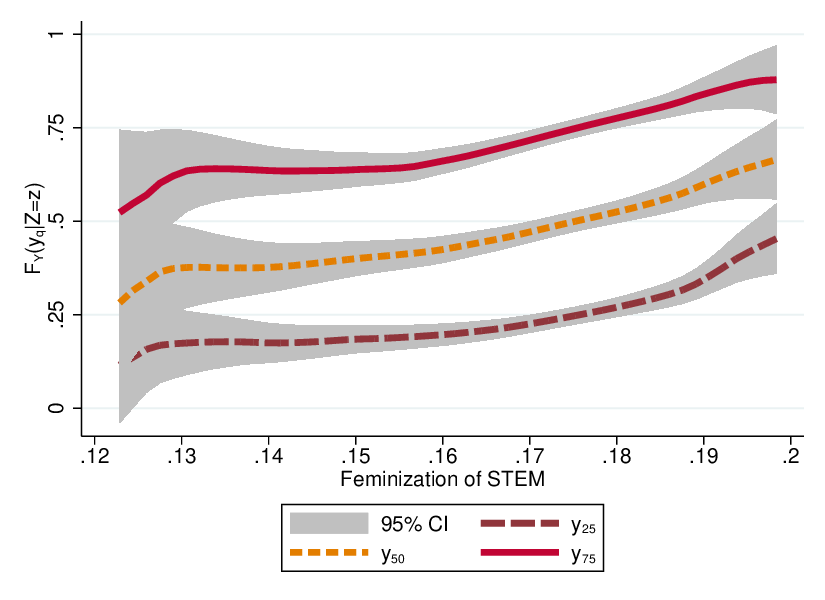} 
}
\subfigure[Mother's education]{
\includegraphics[scale=0.5]{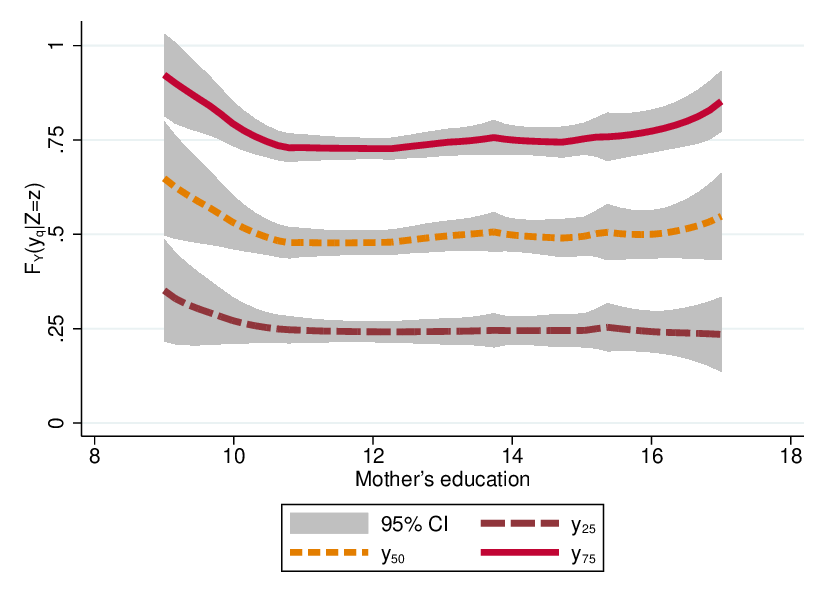} 
}
\label{fig:relation_ydz}
\end{figure}


\subsection{Findings}

Figure~\ref{fig:cr_income} shows the lower bound of a one-sided~95\% confidence region for the cost function, in the case of perfect foresight. More precisely, the left panel represents a function~$C_n(y,z)$ of income and feminization of STEM, such that~$\lim_{n\rightarrow\infty}\mathbb P(C_n(y,z)\leq \underline C(y,z))\geq0.95$, where~$\underline C$ is the lower bound of~(\ref{eq:CFB}). The right panel of Figure~\ref{fig:cr_income} shows the same, except that the feminization of STEM is replaced with the mother's education level.
We observe that costs tend to be high for low income individuals and remain high for high income individuals, when the rate of feminization of STEM is low. 
Figure~\ref{fig:surv_costdivincome} shows a different visualization the data from Figure~\ref{fig:cr_income}. 
Figure~\ref{fig:surv_costdivincome} shows, for each share~$c_0$ of total income,  the proportion of individuals for whom the upper (resp. lower) bounds of a 95\% confidence region for the cost as a share of income is greater than~$c_0$. 
Panels~(a) and~(b) show this for the feminization of STEM and the mother's education, respectively. 

\begin{figure}[hbtp]
\caption{Lower bound of the 95\% confidence region for the cost of STEM~$C(y,z)$ in perfect foresight.}
\vskip10pt
\centering
\subfigure[Feminization of STEM]{
\includegraphics[scale=0.48]{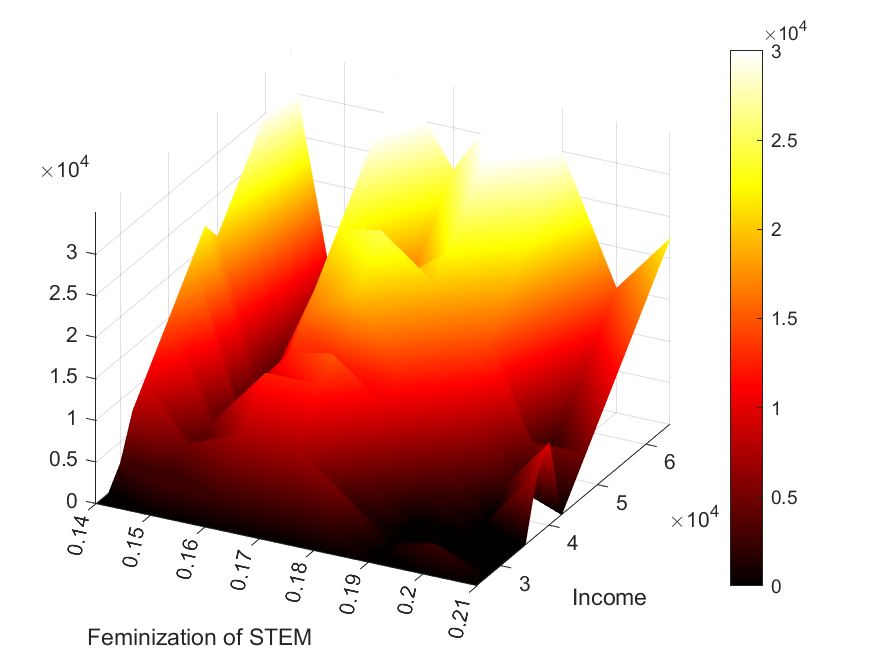}
}
\subfigure[Mother's education]{
\includegraphics[scale=0.48]{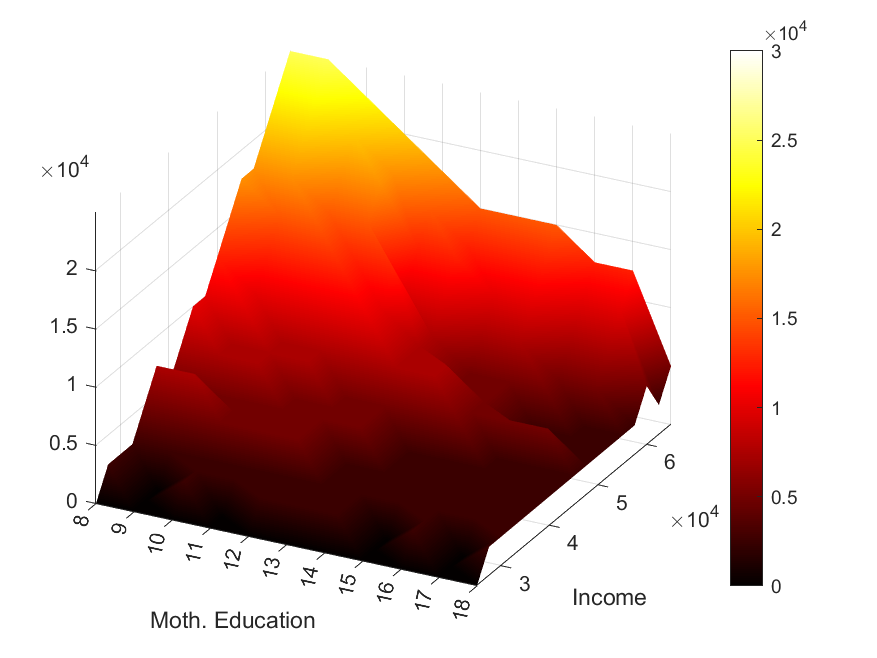}
}
\label{fig:cr_income}
\end{figure}

\begin{figure}[hbtp]
\caption{The upper blue (resp. lower red) curve traces the proportion of the population for whom the lower (resp. upper) bound of the 95\% confidence region for the cost of STEM as a proportion of income is below~$c_0$.}
\centering
\vskip10pt
\subfigure[Feminization of STEM]{
\includegraphics[scale=0.5]{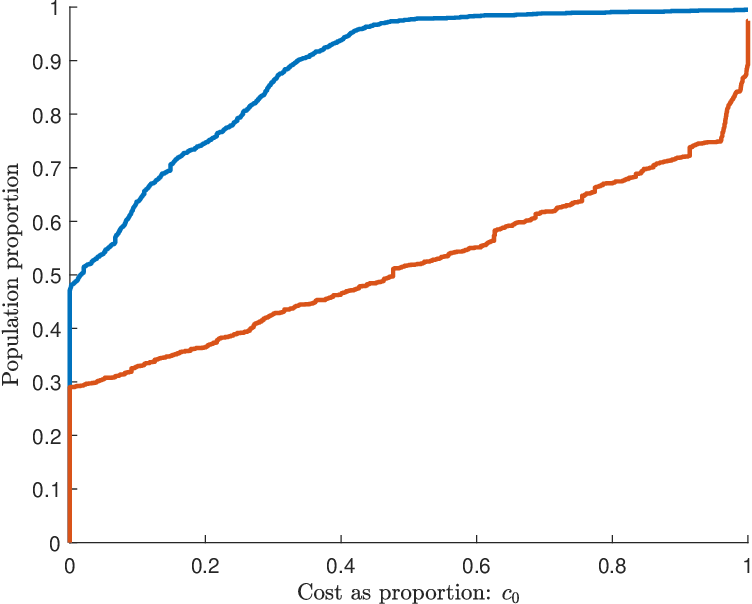}
}
\subfigure[Mother's education]{
\includegraphics[scale=0.5]{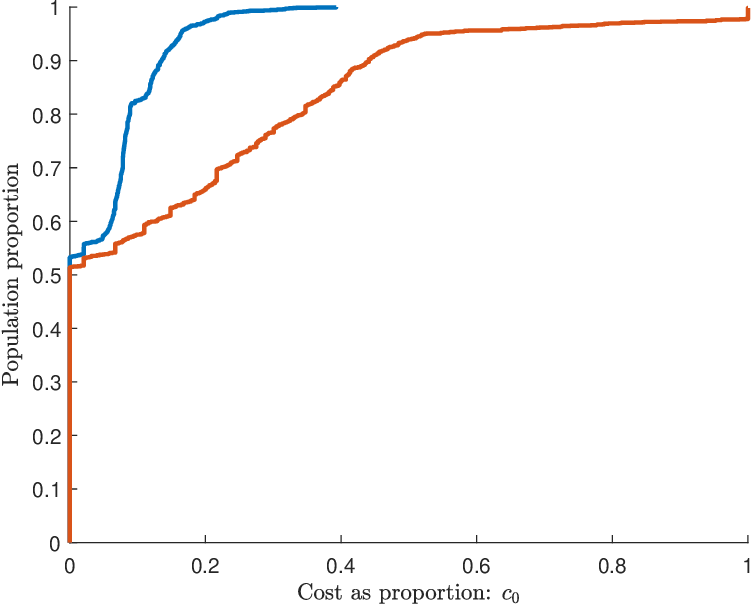}
}
\label{fig:surv_costdivincome}
\end{figure}

Figure~\ref{fig:cr_income_IF} shows the estimated cost and lower bound of the 95\% confidence region 
for the non pecuniary cost of STEM based on the imperfect foresight closed form expressions in Corollary~\ref{cor:cf}. The lower bounds include~$0$, hence are uninformative. It is expected from the theory that perfect foresight bounds be tighter than the imperfect foresight ones. However, the sharp contrast we observe here between the informativeness of the perfect foresight bounds and the non informativeness of the imperfect foresight ones is a cause for concern. As we have seen in Section~\ref{sec:sim}, there is a possibility of spurious cost detection as a result of misspecification of the perfect forecast model.

\begin{figure}[hbtp]
\caption{Estimated lower bound (red) and lower bound of the 95\% confidence region (blue) for the cost of STEM in imperfect foresight.}
\vskip10pt
\centering
\subfigure[Feminization of STEM]{
\includegraphics[scale=0.48]{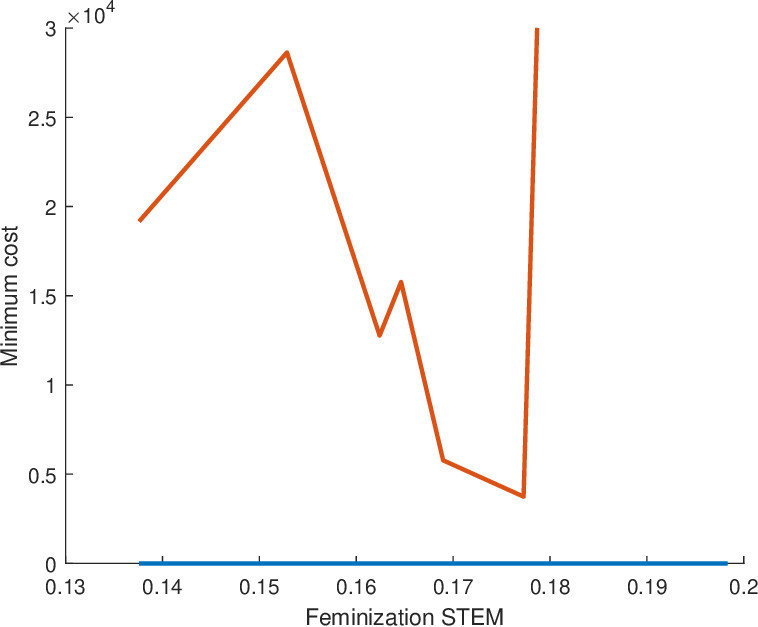}
}
\subfigure[Mother's education]{
\includegraphics[scale=0.48]{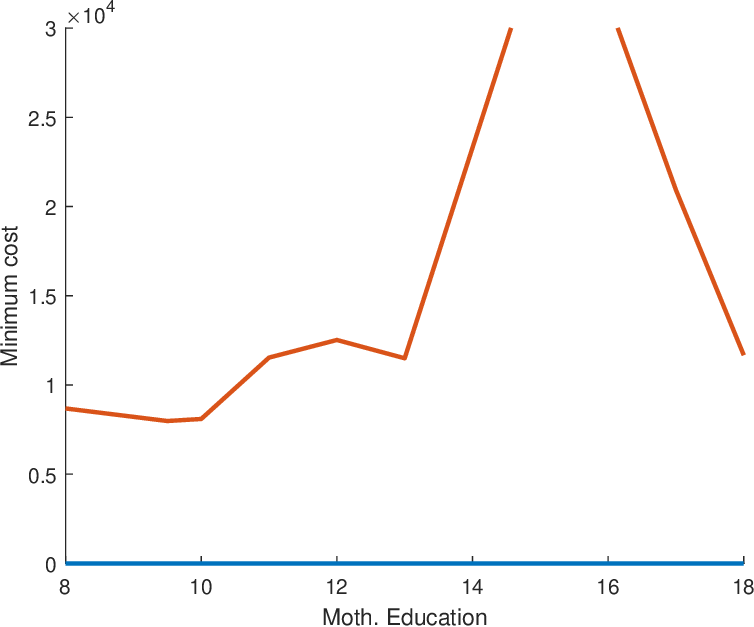}
}
\label{fig:cr_income_IF}
\end{figure}




\section{Discussion}

We combine an extended Roy model with a stochastic monotonicity constraint. We are motivated by the need to uncover causes of under representation of women in STEM higher education. On the one hand, it is well documented that expected potential earnings are not the only drivers of major choice, and that non pecuniary, sometimes gender specific factors affect decisions. This motivates the extended Roy model of major choice, where students choose the major that maximizes a utility that depends on both earnings and amenities. On the other hand, there is a sizable literature on the positive effects of role models on choice and outcomes. This motivates the assumption that the vector of expected earnings is stochastically monotone with respect to a variable that measures the influence of role models, namely the mother's education and the proportion of faculty in STEM that are female. We derive testable implications of the model that combines the extended Roy selection with the stochastic monotonicity constraint. We also derive closed form sharp bounds on the implied non pecuniary cost of STEM for women.

Our current methodology cannot disentangle the role of real or perceived gender biased disamenities of STEM fields, in terms of family friendliness and work/life balance, from behavioral and preference biases related to gender stereotypes. An area of concern regarding the validity of our identifying stochastic monotonicity assumption is the aggregation in the STEM category, of areas such as engineering, with extremely low feminization rates and high relative incomes, with areas such as life sciences, with high feminization rates and low relative incomes. Access to more disaggregated data on the proportions of female faculty in mathematics intensive fields (rather than STEM fields as traditionally classified) would considerably alleviate this concern.




\begin{appendix}

{\scriptsize


\section{Stochastic monotonicity}
\label{app:SM}

We rely on the definition of first order stochastic dominance for random vectors.

\begin{definition}[Upper Sets]\label{def:US}
A set $U\subseteq\mathbb R^k$ is called an {\em upper set} if $y\in U$ implies $\tilde y\in U$ for all~$\tilde y\geq y$.
\end{definition}

\begin{definition}[First Order Stochastic Dominance]\label{def:SS}
A random vector $X_1$ is first order stochastically dominated by a random vector~$X_2$ if and only if $\mathbb P(X_1\in U)\leq\mathbb P(X_2\in U)$ for all upper sets~$U$ in~$\mathbb R^k$.
\end{definition}

\begin{definition}[Stochastic Monotonicity]\label{def:S-Mon}
A random vector $X$ on $\mathbb R^k$ is said to be {\em stochastically monotone} (non decreasing) with respect to a random vector~$Z$ on~$\mathbb R^l$ if the conditional distribution of~$X$ given~$Z=z_1$ is first order stochastically dominated by the conditional distribution of $X$ given $Z=z_2$, for any pair $(z_1,z_2)$ of elements of the support of $Z$ such that~$z_2\geq z_1$.
\end{definition}


\section{Proofs of results in the main text}

\begin{proof}[Proof of Theorem~\ref{thm:charIF}]
Fix any $\pi\in\Pi$, and any $u\in\mathcal U(\pi)$. By Definition~\ref{def:idsetIF}, there exists a random vector $(Y_0,Y_1,D,Z)$ where $((1-D)Y_0+DY_1,D,Z)$ has distribution~$\pi$ and Assumptions~\ref{ass:if} and~\ref{ass:MIV} are satisfied. Call $Y:=(1-D)Y_0+DY_1$. By Assumption~\ref{ass:if}, if $D=d,$ then $\mathbb E[u(Y,D,Z)\vert \mathcal I]=\mathbb E[u(Y_d,d,Z)\vert \mathcal I]\geq\mathbb E[u(Y_{1-d},1-d,Z)\vert \mathcal I]$, for~$d=0,1$. Hence, $\mathbb E[u(Y,D,Z)\vert \mathcal I]=\max\{\mathbb E[u(Y_0,0,Z)\vert \mathcal I],\mathbb E[u(Y_1,1,Z)\vert \mathcal I]\}$. By Assumption~\ref{ass:MIV}, $\mathbb E[u(Y,D,Z)\vert \mathcal I]$ is therefore stochastically monotone with respect to~$Z$. Since~$Z$ is $\mathcal I$-measurable, iterated expectations yields monotonicity of the function~$z\mapsto\mathbb E[u(Y,D,Z)\vert Z=z]$ as required.

Conversely, suppose the function~$u$ on~$\mathcal Y\times\{0,1\}\times\mathcal Z$ is continuous and increasing in its first argument and is such that $\mathbb E[u(Y,D,Z)\vert Z=z]$ is monotonic non decreasing in~$z$. Then, if we define\footnote{This is where the condition~$\mathbb E(u(Y,D,Z)\vert Z)\geq\mathbb E(u(\underline b,1-D,Z)\vert Z)$ is required.} $Y_0:=u^{-1}(u(Y,D,Z),0,Z)$ and $Y_1:=u^{-1}(u(Y,D,Z),1,Z)$, the vector $(u(Y_0,0,Z),u(Y_1,1,Z))$ is equal to $(u(Y,D,Z),u(Y,D,Z))$, and therefore satisfies Assumption~\ref{ass:MIV}. Assumption~\ref{ass:if} is also trivially satisfied, since it places non constraint on~$D$ in case of a tie. Finally, $(1-D)Y_0+DY_1=Y$. Hence, there exists a random vector $(Y_0,Y_1,D,Z)$ where $((1-D)Y_0+DY_1,D,Z)$ has distribution~$\pi$ and Assumptions~\ref{ass:if} and~\ref{ass:MIV} are satisfied, and $u$ therefore belongs to the identified set~$\mathcal U(\pi)$.
\end{proof}

\begin{proof}[Proof of Corollary~\ref{cor:cf}]
By Theorem~\ref{thm:charIF}, $\mathbb E[u(Y,D,Z)\vert Z=z]$ is non decreasing in~$z$. Hence, by the definitions of~$C$ and~$\tilde{\mathcal Z}$, $\mathbb E[Y-DC(Z)\vert Z=z]=\mathbb E[Y\vert Z=z]-p(z)C(z)$ is non decreasing on~$\tilde{\mathcal Z}$. Hence 
\[
\mathbb E[Y\vert Z=z]-p(z)C(z)\leq \inf_{\tilde z\geq z,\,z\in\tilde{\mathcal Z}}\mathbb E[Y\vert Z=\tilde z].
\]
So $C(z)\geq \underline C(z),$ by the definition of~$\underline C$ in Equation~(\ref{eq:LB}). The upper bound is treated identically.
\end{proof}

\begin{proof}[Proof of Theorem~\ref{thm:char}]
Fix any $\pi\in\Pi$, and any $u\in\mathcal U^\prime(\pi)$. By Definition~\ref{def:idset}, there exists a random vector $(Y_0,Y_1,D,Z)$ where $((1-D)Y_0+DY_1,D,Z)$ has distribution~$\pi$ and Assumptions~\ref{ass:PO}, \ref{ass:sel}, and~\ref{ass:mMax} are satisfied. Call $Y:=(1-D)Y_0+DY_1$. For any $y\in\mathcal Y$, by Assumption~\ref{ass:sel}, $u(Y,D,Z)\leq y$ if and only if $u(Y_0,0,Z)\leq y$ and $u(Y_1,1,Z)\leq y$, so that $\mathbb P(u(Y,D,Z)\leq y\vert Z)=\mathbb P(u(Y_0,0,Z)\leq y,u(Y_1,1,Z)\leq y\vert Z)$. By Assumption~\ref{ass:mMax}, the latter is monotone non increasing in~$z$. Hence, for any $y\in\mathcal Y$, $\mathbb P(u(Y,D,Z)> y\vert Z=z)$ is non decreasing in~$z$, so for any ordered pair $z\geq \tilde z,$ of the support of~$Z$, (\ref{eq:char}) holds. 

Conversely, suppose the function~$u$ on~$\mathcal Y\times\{0,1\}\times\mathcal Z$ satisfies~(\ref{eq:char}) for all ordered pairs~$z\geq \tilde z$ of the support of~$Z$. Then, if we define $Y_0:=u^{-1}(u(Y,D,Z),0,Z)$ and $Y_1:=u^{-1}(u(Y,D,Z),1,Z)$, the vector $(u(Y_0,0,Z),u(Y_1,1,Z))$ is equal to $(u(Y,D,Z),u(Y,D,Z))$, which is stochastically monotone with respect to~$Z$. This implies that Assumption~\ref{ass:mMax} holds for such a pair~$(Y_0,Y_1)$. Since, by construction, $u(Y_0,0,Z)=u(Y_1,1,Z)=u(Y,D,Z)$, Assumption~\ref{ass:sel} is also trivially satisfied, since it places no constraint on~$D$ in case of a tie. Finally, $(1-D)Y_0+DY_1=Y$. Indeed, when~$D=0$, we have~$Y_0=u^{-1}(u(Y,0,Z),0,Z)=Y$, and when~$D=1$, we have~$Y_1=u^{-1}(u(Y,1,Z),1,Z)=Y$ as required. Hence, there exists a random vector $(Y_0,Y_1,D,Z)$ where $((1-D)Y_0+DY_1,D,Z)$ has distribution~$\pi$ and Assumptions~\ref{ass:sel}, \ref{ass:mMax} are satisfied, and $u$ therefore belongs to the identified set~$\mathcal U^\prime(\pi)$.
\end{proof}

\begin{proof}[Proof of Corollary~\ref{cor:pointwise}]
From Equation~(\ref{eq:pointwise}) proved in the main text for all~$y\in\mathcal Y$, we get $L(y-C(y,z)\vert z) \leq \mathbb P(Y-C(Y,Z)\leq y-C(y,z), D=1\vert z) \leq U(y-C(y,z)\vert z)$.
By the definition of~$U^-$ and since~$y\mapsto y-C(y,z)$ is increasing, the second inequality implies~$y-C(y,z)\geq U^-(F_1(y\vert z)\vert z)$ as desired.
Similarly, by the definition of~$L_-$ and since~$y\mapsto y-C(y,z)$ is increasing, the first inequality implies~$y-C(y,z)\leq L_-(F_1(y\vert z)\vert z)$.
\end{proof}

\begin{proof}[Proof of Lemma~\ref{lemma:env}]
We prove the results relating to~$\underline F$. The results relating to~$\bar F$ are treated symmetrically. Let $\mathcal M$ be the set of non decreasing real valued functions from~$\mathbb R$ to~$[0,1]$. Define the operator~$S$ by
\begin{eqnarray*}
\begin{array}{cllc}
S: & 2^{\mathcal M} & \rightarrow & \mathcal M \\
& K & \mapsto & S(K),
\end{array}
\end{eqnarray*}
where $S(K)$ is defined for each~$y\in\mathbb R$ by $S(K)(y)=\sup\{F(y); F\in K\}$. Note that $S(K)$ is uniquely defined and belongs to~$\mathcal M$ as required.
Next, define the operator~$R$ by 
\begin{eqnarray*}
\begin{array}{cllc}
R: & \mathcal M & \rightarrow & \mathcal M \\
& F & \mapsto & R(F),
\end{array}
\end{eqnarray*}
where $R(F)$ is defined for each~$y\in\mathbb R$ by $R(F)(y)=\lim_{\tilde y\downarrow y}F(\tilde y)$. Note that~$R(F)$ is uniquely defined and belongs to~$\mathcal M$ as required.
In addition, $R$ is idempotent: $R(F)$ is right-continuous by construction, so that~$R(R(F))=R(F)$, and monotone: if $F_1\leq F_2$, then $R(F_1)\leq R(F_2)$.

Define~$F_K:=R(S(K))$. We first show that $F_K$ is a cdf if $K$ is a collection of cdfs bounded above by the cdf~$1\{\cdot\geq\underline b\}$. Indeed, $F_K$ is non decreasing by construction of~$S$ and~$T$, $F_K$ is right-continuous because~$R$ is idempotent, $F_K$ tends to~$1$ at~$+\infty$ since each element of~$K$ is a cdf. Finally, $F_K$ tends to~$0$ at $-\infty$ since $0\leq F_K\leq R(\bar F).$ Define~$K_z:=\{F(\cdot\vert \tilde z): \tilde z\geq z,\tilde z\in\tilde{\mathcal Z}^\prime\}$. Then,~$\underline F$ of Lemma~\ref{lemma:env} is equal for each~$z\in\tilde{\mathcal Z}^\prime$ to $\underline F(\cdot\vert z)=R(S(K_z))$. By construction, $S(K_z)\leq S(K_{\tilde z})$ for $\tilde z\geq z$. Hence, by monotonicity of~$R$, $\underline F(y\vert z)$ is monotone non increasing in~$z$. 

Next, if~$(y,z)\mapsto \tilde F(y\vert z)$ is a conditional cdf, such that~$\tilde F( y\vert z)\geq F(y\vert z)$ for each~$(y,z)$ and $\tilde F$ is monotone non increasing in~$z$ for each~$y$, then $\tilde F(\cdot\vert z)\geq\sup_{\tilde z\geq z}\tilde F(\cdot\vert \tilde z)\geq \sup_{\tilde z\geq z} F(\cdot\vert \tilde z)=S(K_z)$. Now, $R(\tilde F(\cdot\vert z))=\tilde F(\cdot\vert z)$ since the latter is right-continuous. Finally, monotonicity of~$R$ yields $\tilde F(\cdot\vert z)\geq R(S(K_z)):=\underline F(\cdot\vert z)$, which is the desired result.
\end{proof}

\begin{proof}[Proof of Proposition~\ref{prop:min}]
The validity of the bounds is proved in the same way as in Corollary~\ref{cor:pointwise}. We now prove~$\underline C\in\mathcal C(\pi)$. The proof of~$\bar C\in\mathcal C(\pi)$ is identical. 

(1) First, we show that $y\mapsto y-\underline{C}(y,z)$ is increasing.
Since~$\pi\in\Pi^r$, $y\mapsto \underline{F}(y|z)-\mathbb{P}(Y\leq y,D=0|z)$ is
increasing and continuous and thus, so is 
$y\mapsto L(y|z):=\sup_{\tilde y\leq y}\{\underline{F}(\tilde y|z)-\mathbb{P}(Y\leq \tilde y,D=0|z)\}$. So $y-\underline{C}(y,z)=L_{-}\left(\mathbb{P}(Y\leq y,D=1|z)|z\right)=L^{-1}\left(\mathbb{P}(Y\leq y,D=1|z)|z\right)$
is increasing in $y$ for all $z\in\mathcal{Z}$ because
both $L^{-1}(y|z)$ and $\mathbb{P}(Y\leq y,D=1|z)$ are
increasing.
(2) Second, we prove that $\underline{C}(y,z)$ is non negative. We have
$L(y|z)=\underline{F}(y|z)-\mathbb{P}(Y\leq y,D=0|z)\geq\mathbb{P}(Y\leq y|z)-\mathbb{P}(Y\leq y,D=0|z)=\mathbb{P}(Y\leq y,D=1|z).$
So we have that $L(y|z)\geq\mathbb{P}(Y\leq y,D=1|z)$ and we already
know that $L^{-1}(y|z)$ is increasing and continuous, so that
$L^{-1}\left(L(y|z)|z\right)\geq L^{-1}\left(\mathbb{P}(Y\leq y,D=1|z)|z\right)$, hence
$y\geq L_{-}\left(\mathbb{P}(Y\leq y,D=1|z)|z\right)
$ or $0\leq y-L_{-}\left(\mathbb{P}(Y\leq y,D=1|z)|z\right)=\underline{C}(y,z)$, as desired.
(3)~We now show that $\mathbb{P}(Y-\underline{C}(Y,z)\leq y,D=1|z)=\underline{F}(y|z)-\mathbb{P}(Y\leq y,D=0|z)$.
As we have shown before, $\underline{C}(y,z)=y-L^{-1}\left(\mathbb{P}(Y\leq y,D=1|z)|z\right)$.
Hence~$\mathbb{P}(Y-\underline{C}(Y,z)\leq y,D=1|z)=\mathbb P(L^{-1}(F_1(Y\vert z)\vert z)\leq y,D=1\vert z)=\mathbb P(F_1(Y\vert z)\leq L(y\vert z),D=1\vert z)=L(y,z),$ since~$F_1$ is continuous and increasing.
(4)~Finally, by Theorem~\ref{thm:char}, it suffices to show that $Y-D\underline{C}(Y,Z)$
is stochastically monotone with respect to $Z$. 
We have
\begin{eqnarray*}
\mathbb{P}(Y-D\underline{C}(Y,Z)\leq y|z) & = & \mathbb{P}(Y-D\underline{C}(Y,Z)\leq y,D=1|z)+\mathbb{P}(Y-D\underline{C}(Y,Z)\leq y,D=0|z) \\
& = & \mathbb{P}(Y-\underline{C}(Y,Z)\leq y,D=1|z)+\mathbb{P}(Y\leq y,D=0|z) \\
& = & \underline{F}(y|z)-\mathbb{P}(Y\leq y,D=0|z)+\mathbb{P}(Y\leq y,D=0|z) \; = \; \underline{F}(y|z)
\end{eqnarray*}
and $\underline{F}(y|z)$ is stochastically monotone by construction.
\end{proof}

\begin{proof}[Proof of Proposition~\ref{prop:iter}]

We show the result for the lower bound. The proof relative to the upper bounds is identical.

{\bf Step 1:} We show that~$\underline C^{(n)}$ is a valid lower bound for all~$n\geq1$. We take any~$C\in\mathcal C(\pi),$ and show that~$C\geq\underline C^{(n)}$. By Corollary~\ref{cor:pointwise}, it holds for~$n=0$. Assume it is true for~$n-1$.
In what follows,~$F_d(y\vert z):=\mathbb P(Y\leq y,D=d\vert Z=z)$, for~$d=0,1$. By  assumption, $Y-D \underline{C}^{(n-1)}(Y,Z) \geq Y-DC(Y,Z)$. For any $(y,z) \in \mathcal Y \times \mathcal Z$, we have the following sequence of implications:
\begin{eqnarray*}
\begin{array}{l}
\mathbb P(Y - D \underline{C}^{(n-1)}(Y,Z)\leq y\vert Z=z) \; \leq \; \mathbb P(Y-DC(Y,Z)\leq y\vert Z=z) \\ \\
\Rightarrow \lim_{\tilde y\downarrow y}\sup\left\{\mathbb P(Y- D \underline{C}^{(n-1)}(Y,Z) \leq \tilde y\vert \tilde z):\tilde z\in\mathcal Z,\tilde z\geq z\right\} \; \leq \; \mathbb P(Y-DC(Y,Z)\leq y\vert Z=z) \\ \\
\Rightarrow \underline F^{(n)}(y \vert z) \; \leq \; \mathbb P(Y-DC(Y,Z)\leq y\vert Z=z) \\ \\
\Rightarrow \underline F^{(n)}(y \vert z) - F_0(y \vert z) \; \leq \; \mathbb P(Y-DC(Y,Z)\leq y\vert Z=z) - F_0(y \vert z) \\ \\
\Rightarrow \underline F^{(n)}(y \vert z) - F_0(y \vert z) \; \leq \; \mathbb P(Y-C(Y,Z)\leq y,D=1\vert Z=z) \\ \\
\Rightarrow \sup_{\tilde y\leq y} \left\{{\underline F^{(n)}(y \vert z) - F_0(y \vert z)}\right\} \; \leq \; \mathbb P(Y-C(Y,Z)\leq y,D=1\vert Z=z) \\ \\
\Rightarrow L^{(n)}(y\vert z) \; \leq \; \mathbb P(Y-C(Y,Z)\leq y,D=1\vert Z=z).
\end{array}
\end{eqnarray*}
The second line uses the fact that $Y-DC(Y,Z)$ is stochastically monotone with respect to $Z$. The third line follows by the definition of~$\underline F^{(n)}(y \vert z)$. The fifth line holds by monotonicity of $\mathbb P(Y-C(Y,Z)\leq y,D=1\vert Z=z)$. We now use the above expression, to establish the desired inequality. Evaluating the function $L^{(1)}(. \vert z)$ at $y-C(y,z)$, we have:
\begin{eqnarray*}
\begin{array}{l}
L^{(n)}(y -C(y,z) \vert z) \; \leq \; \mathbb P(Y-C(Y,Z)\leq y -C(y,z),D=1\vert Z=z) \\ \\
\Rightarrow  L^{(n)}(y -C(y,z) \vert z) \; \leq \; \mathbb P(Y\leq g^{-1}(y -C(y,z)),D=1\vert Z=z) \\ \\
\Rightarrow  L^{(n)}(y -C(y,z) \vert z) \; \leq \; F_1(y \vert z) \\ \\
\Rightarrow  y -C(y,z)  \; \leq \; L_{-}^{(n)} (F_1(y \vert z)).\\ \\
\Rightarrow  C(y,z)  \; \leq \; y -L_{-}^{(n)} (F_1(y \vert z)) = \underline C^{(n)}(y,z),
\end{array}
\end{eqnarray*}
which was the desired result. This completes the proof that~$C\geq\underline C^{(n)}$ for all~$n\geq1$.

{\bf Step 2:} We now show monotonicity of the sequences, i.e., for alll~$(y,z) \in \mathcal Y \times \mathcal Z$, $ \underline F^{(n)}(y\vert z) \geq \underline F^{(n-1)}(y\vert z)$, $  L^{(n)}(y\vert z) \geq L^{(n-1)}(y\vert z)$, and $\underline C^{(n)}(y,z) \geq \underline C^{(n-1)}(y,z) \geq 0$.
Let $(y,z) \in \mathcal Y \times \mathcal Z$. Note that, by construction,
\begin{eqnarray}
\label{eq:y_dc_distr}
\begin{array}{l}
\mathbb P(Y - D \underline{C}^{(n-1)}(Y,Z)\leq y\vert Z = z)\\ \\
= P(Y - \underline{C}^{(n-1)}(Y,Z)|Z=z)\leq y ,D=1 \vert z) + \mathbb P(Y\leq y,D=0\vert Z=z),\\ \\
= P(Y - Y +  L_{-}^{(n-1)}( F_1(Y\vert z) \vert Z=z)\leq y ,D=1 \vert z) + \mathbb P(Y\leq y,D=0\vert Z=z),\\ \\
= P( F_1(Y\vert z)|z) \leq L^{(n-1)}(y \vert z),D=1 \vert z) + \mathbb P(Y\leq y,D=0\vert Z=z), \\  \\
= L^{(n-1)}(y \vert z) + F_0(y\vert z).
\end{array}
\end{eqnarray}
Similarly, we can show, $\mathbb P(Y - D \underline{C}^{(n)}(Y,Z)\leq y\vert z) = L^{(n)}(y \vert z) + F_0	(y\vert z)$.
Using the above result, we obtain:
\begin{eqnarray}
\label{eq:monotone_F}
\begin{array}{lll}
\underline F^{(n)}(y\vert z) &= & \lim_{\tilde y\downarrow y}\sup\left\{L^{(n-1)}(y \vert z) + F_0(y\vert z) :\tilde z\in\mathcal Z,\tilde z\geq z\right\},\\ \\
&= & \lim_{\tilde y\downarrow y}\sup\left\{\sup_{y' \leq \tilde y}\left\{\underline F^{(n-1)}(y'\vert z)- F_0(y'\vert z)\right\} + F_0(\tilde y\vert z) :\tilde z\in\mathcal Z,\tilde z\geq z\right\},\\ \\
&\geq & \underline F^{(n-1)}(y\vert z)
\end{array}
\end{eqnarray}
The first equality uses Equation \ref{eq:y_dc_distr}, and the second equality uses the definition of $L^{(n-1)}(y \vert z)$. The inequality in the last line uses the fact that the supremum is taken over a set that contains $F^{(n-1)}(y\vert z)$. The ordering of $\underline F^{(n-1)}(y\vert z)$ and $\underline F^{(n)}(y\vert z)$ implies easily that:
\begin{eqnarray*}
\begin{array}{lll}
 \underline F^{(n)}(y\vert z) - F_0(y \vert z) \geq \underline F^{(n-1)}(y\vert z) - F_0(y \vert z)
 &\Rightarrow &  L^{(n)}(y\vert z) \geq L^{(n-1)}(y\vert z)
 \end{array}
 \end{eqnarray*}
The ordering of $L^{(n-1)}(y\vert z)$ and $L^{(n)}(y\vert z)$ further implies that: 
\begin{eqnarray*}
\begin{array}{lll}
 L_{-}^{(n)}(y\vert z) \leq L_{-}^{(n-1)}(y\vert z) &\Rightarrow &   y- L_{-}^{(n)}(y\vert z) \geq y-L_{-}^{(n-1)}(y\vert z).
 \end{array}
 \end{eqnarray*}

Therefore, $\{\underline F^{(n)}(y\vert z): n=1,2, \ldots.\}$ is a pointwise increasing sequence defined on a compact set. ($F^{(n)}(y\vert z)$ is a cdf.) Thus, it must converge to a limit $F^{(\infty)}(y\vert z)$, which is a fixed-point of the sequence. Defining the corresponding functions $L^{(\infty)} (y\vert z)$ and $\underline{C}^{(\infty)}(y,z)$ from~$F^{(\infty)}(y\vert z)$, we have, for any $y \in \mathcal Y$ and $z \in \mathcal Z$:
\begin{eqnarray*}
\begin{array}{l}
\mathbb P(Y - D \underline{C}^{(\infty)}(Y,Z)\leq y\vert Z = z) \\ \\
= L^{(\infty)}(y \vert z) + F_0(y\vert z) , \\  \\
= \sup_{\tilde y \leq y}\left\{\underline F^{(\infty)}(\tilde y \vert z)- F_0(\tilde  y \vert z)\right\} + F_0(y\vert z) ,\\ \\
= \sup_{\tilde y \leq y}\left\{\lim_{y'\downarrow \tilde y}\sup\left\{\mathbb P(Y- D \underline{C}^{(\infty)}(Y,Z) \leq y' \vert Z = \tilde z):\tilde z\in\mathcal Z,\tilde z\geq z\right\}- F_0(\tilde y\vert z)\right\} + F_0(y\vert z) ,\\ \\
= \sup\left\{\sup_{\tilde y \leq y}\left\{\lim_{y'\downarrow \tilde y}\mathbb P(Y- D \underline{C}^{(\infty)}(Y,Z) \leq y' \vert Z = \tilde z)- F_0(\tilde y\vert z) + F_0(y\vert z)\right\}:\tilde z\in\mathcal Z,\tilde z\geq z\right\} ,\\ \\
\geq  \sup\left\{\mathbb P(Y- D \underline{C}^{(\infty)}(Y,Z) \leq y \vert Z = \tilde z):\tilde z\in\mathcal Z,\tilde z\geq z\right\}.
\end{array}
\end{eqnarray*}
The first equality applies a derivation similar to Equation \ref{eq:y_dc_distr} to obtain a relation between the cdf of $Y-DC^{(\infty)}(Y,Z)$ and $L^{(\infty)}$. The fourth equality permutes the supremum over the sets $\{\tilde y \in \mathcal Y , \tilde y \leq y\}$ and $\{\tilde z \in \mathcal Z, \tilde z \geq z\}$. The inequality in the last line uses the fact that the set $\{\tilde y \in \mathcal Y , \tilde y \leq y\}$ trivially includes $y$. The above derivation shows that $Y - D \underline{C}^{(\infty)}(Y,Z)$ is stochastic monotone with respect to $Z$. Finally, $\underline C^{(\infty)}\geq \underline C\geq0$, and~$y - \underline C^{(\infty)}(y,z) \; = \; L_{-}^{(\infty)}\left(F_1(y\vert z)\vert z\right)$ is increasing, since~$F_1(.\vert z)$ and~$L_-^{(\infty)}(.|z)$ both are. This completes the proof of~$\underline C^{(\infty)}\in\mathcal C(\pi)$.
\end{proof}

\begin{proof}[Proof of Lemma~\ref{lemma:CLR}]
Since~$\pi\in\Pi^r$ from Definition~\ref{def:adj}, $L(y\vert z)$ is continuous and increasing. Hence, we have:
\begin{eqnarray*}
L(y\vert z)&=&\sup_{\tilde y\leq y}\left\{ \underline F(\tilde y\vert z) - \mathbb P(Y\leq \tilde y,D=0\vert z)\right\} \\
&=& \underline F(\tilde y\vert z) - \mathbb P(Y\leq \tilde y,D=0\vert z) \\
&=& \lim_{\tilde y\downarrow y}\sup_{\tilde z\leq z}\mathbb P(Y\leq\tilde y\vert \tilde z)-\mathbb P(Y\leq y,D=0\vert z) \\
&=& \lim_{\tilde y\downarrow y}\left\{\sup_{\tilde z\leq z} 
 \mathbb P(Y\leq\tilde y\vert \tilde z)-\mathbb P(Y\leq \tilde y,D=0\vert z) \right\} \\
&=& \sup_{\tilde z\leq z}G(y\vert z,\tilde z).
\end{eqnarray*}
Since~$L$ is continuous and increasing, it can also be written
\begin{eqnarray*}
L(y\vert z)= \sup_{\tilde y\leq y}\sup_{\tilde z\leq z}G(\tilde y\vert z,\tilde z) 
= \sup_{\tilde z\leq z} \sup_{\tilde y\leq y}G(\tilde y\vert z,\tilde z).
\end{eqnarray*}
Defining~$\tilde G(y\vert z,\tilde z):=\sup_{\tilde y\leq y}G(\tilde y\vert z,\tilde z)$, $\tilde G(y\vert z,\tilde z)\leq\sup_{\tilde z\leq z}\tilde G(y\vert z,\tilde z)$ implies
\[
L_-(x\vert z):=\sup\{y: L(y\vert z)\leq x\}\leq \tilde G_-(x\vert z,\tilde z):=\sup\{y: \tilde G(y\vert z,\tilde z)\leq x\},
\]
for all~$\tilde z\leq z$. Hence~$L_-(x\vert z)\leq \inf_{\tilde z\leq z}\tilde G_-(x\vert z,\tilde z)$.
Finally, by continuity of~$G$ and by the definition of~$\tilde G$, we have~$\tilde G_-(x\vert z,\tilde z)\leq G_-(x\vert z,\tilde z):=\sup\{y: G(y\vert z,\tilde z)\leq x\}$. The upper bound is treated similarly. 
\end{proof}


\section{Point identification of the cost function}\label{app:HM}
In this section, we describe how Equation~(\ref{eq:maxIF}) can be used to prove the identification result of \cite{HM:2011}. The extended Roy model in this section does not rely on the restriction that the cost function associated with Sector~1 be non negative. However, it assumes that the cost function is a function of exogenous observables only (here the vector $Z$) and that the potential outcomes have a separable representation summarized in the following statement of Assumptions in \cite{HM:2011}.

\begin{assumption}\label{ass:HM}
Observable and potential outcomes are related by $Y=DY_1+(1-D)Y_0$. The selection indicator~$D$ satisfies $\mathbb E[Y_d-dC(Z)\vert\mathcal I]>\mathbb E[Y_{1-d}-(1-d)C(Z)\vert\mathcal I]\Rightarrow D=d$, where $Z$ is measurable with respect to the agent's information $\sigma$-algebra~$\mathcal I$, and $\mathbb E[Y_1-Y_0\vert \mathcal I]=\mathbb E[Y_1-Y_0\vert Z]+V$ with $V\perp Z$.
\end{assumption}

If~$C$ is a function of~$z$ only and $\mathbb E[Y_1-Y_0\vert \mathcal I]=\mathbb E[Y_1-Y_0\vert Z]+V$ with $V\perp Z$, then (\ref{eq:maxIF}) yields $\mathbb E[Y-DC(Z)\vert\mathcal I]=\max\{0,\mathbb E[Y_1-Y_0-C(Z)\vert\mathcal I]\}+\mathbb E[Y_0\vert \mathcal I]$, hence $\mathbb E[Y-Y_0\vert \mathcal I]-C(Z)\mathbb P[D=1\vert\mathcal I]=\max\{0,\mathbb E[Y_1-Y_0\vert Z]-C(Z)+V\}.$ Iterated expectations then yields 
\begin{eqnarray}\label{eq:HM1}
\mathbb E[Y-Y_0\vert Z]-C(Z)\mathbb P[D=1\vert Z]=\mathbb E[\max\{0,\mathbb E[Y_1-Y_0\vert Z]-C(Z)+V\}\vert Z].
\end{eqnarray} 
To differentiate both terms with respect to the first component of~$Z$, we make the following regularity assumptions, where $Z_{-1}$ (resp. $z_{-1}$) is the vector of components of~$Z$ (resp. $z$) excluding~$Z_1$ (resp. $z_1$).

\begin{assumption}\label{ass:reg}
For all~$z_{-1}$ on the support of~$Z_{-1}$, the functions $z_1\mapsto C(z),$ $z_1\mapsto \mathbb E[Y_d\vert D=d,Z=z],$ $d=0,1,$ and $z_1\mapsto \mathbb E[D\vert Z=z]$ are continuously differentiable on the support of~$Z_1$ conditional on~$Z_{-1}=z_{-1}$.
\end{assumption}

By Assumption~\ref{ass:reg} and the dominated convergence theorem, the partial derivative of the right-hand side of~(\ref{eq:HM1}) is
\begin{eqnarray*}
\frac{\partial}{\partial z_1}\mathbb E[\max\{0,\mathbb E[Y_1-Y_0\vert Z]-C(Z)+V\}\vert Z]=
\mathbb P[D=1\vert Z=z]\frac{\partial}{\partial z_1}\left(\mathbb E[Y_1-Y_0\vert Z]-C(Z)\right),
\end{eqnarray*}
whereas the differentiating the left-hand side yields
\begin{eqnarray*}
&&\frac{\partial}{\partial z_1}\mathbb E[Y-Y_0\vert Z=z]-C(z)\frac{\partial}{\partial z_1}\mathbb P[D=1\vert Z=z]
-\;\mathbb P[D=1\vert Z=z]\frac{\partial}{\partial z_1}C(z).
\end{eqnarray*}
Finally, we have
\begin{eqnarray*}
&&C(z)\;\frac{\partial}{\partial z_1}\mathbb P[D=1\vert Z=z]\;=\;\frac{\partial}{\partial z_1}\mathbb E[Y-Y_0\vert Z=z]-\mathbb P[D=1\vert Z=z]\;\frac{\partial}{\partial z_1}(\mathbb E[Y_1-Y_0\vert Z=z],
\end{eqnarray*}
which can be rearranged into Equation~(2.6) in \cite{HM:2011}, and which identifies the cost function if conditional means of potential outcomes are identified.

}

\end{appendix}


\bibliography{Roy}

\begin{thebibliography}{53}
\providecommand{\natexlab}[1]{#1}
\providecommand{\url}[1]{\texttt{#1}}
\expandafter\ifx\csname urlstyle\endcsname\relax
  \providecommand{\doi}[1]{doi: #1}\else
  \providecommand{\doi}{doi: \begingroup \urlstyle{rm}\Url}\fi

\bibitem[Altonji et~al.(2016)Altonji, Arcidiacono, and Maurel]{AAM:2016}
J.~Altonji, P.~Arcidiacono, and A.~Maurel.
\newblock The analysis of field choice in college and graduate schools:
  determinants and wage effects.
\newblock \emph{Handbook of the Economics of Education}, 5:\penalty0 305--396,
  2016.

\bibitem[Andrews and Soares(2010)]{AS:2010}
D.~Andrews and G.~Soares.
\newblock Inference for parameters defined by moment inequalities using
  generalized moment selection.
\newblock \emph{Econometrica}, 78:\penalty0 119--157, 2010.

\bibitem[Baillet et~al.(2017)Baillet, Franken, and Weber]{DZHW}
F.~Baillet, A.~Franken, and A.~Weber.
\newblock {DZHW} graduate panel 2009: Data and methods report on the graduate
  panel 2009 (1st and 2nd survey waves).
\newblock Technical report, German Centre for Higher Education Research and
  Science Studies, 2017.

\bibitem[Bayer et~al.(2011)Bayer, Khan, and Timmins]{BKT:2011}
P.~Bayer, S.~Khan, and C.~Timmins.
\newblock Nonparametric identification and estimation in a {R}oy model with
  common nonpecuniary returns.
\newblock \emph{Journal of Business and Economic Statistics}, 29:\penalty0
  201--215, 2011.

\bibitem[Beede et~al.(2011)Beede, Julian, Langdon, McKittrick, Khan, and
  Doms]{beede2011}
D.~Beede, T.~Julian, D.~Langdon, G.~McKittrick, B.~Khan, and M.~Doms.
\newblock Women in {STEM}: a gender gap to innovation.
\newblock Department of Commerce, Economics and Statistics Administration,
  2011.

\bibitem[Blundell et~al.(2007)Blundell, Gosling, Ichimura, and
  Meghir]{BGIM:2007}
R.~Blundell, A.~Gosling, H.~Ichimura, and C.~Meghir.
\newblock Changes in the distribution of male and female wages accounting for
  employment composition using bounds.
\newblock \emph{Econometrica}, 75:\penalty0 323--336, 2007.

\bibitem[Breda et~al.(2018)Breda, Grenet, Monnet, and van
  Effenterre]{BGME:2018}
T.~Breda, J.~Grenet, M.~Monnet, and C.~van Effenterre.
\newblock {Can female role models reduce the gender gap in science? Evidence
  from classroom intervention in French high schools}.
\newblock halshs-01713068, 2018.

\bibitem[Canaan and Mouganie(2022)]{CM:2021}
S.~Canaan and P.~Mouganie.
\newblock The impact of advisor gender on female students' {STEM} enrollment
  and persistence.
\newblock \emph{Journal of Human Resources}, 2022.

\bibitem[Canay and Shaikh(2018)]{CS:2018}
I.~Canay and A.~Shaikh.
\newblock Practical and theoretical advances for inference in partially
  identified models.
\newblock In B.~Honor\'e, A.~Pakes, M.~Piazzesi, and L.~Samuelson, editors,
  \emph{Advances in Economics and Econometrics}, volume~2 of \emph{Econometric
  Society Monographs}, pages 271--306. Cambridge University Press, 2018.

\bibitem[Canes and Rosen(1995)]{CR:95}
B.~Canes and H.~Rosen.
\newblock Following in her footsteps? {F}aculty gender composition and women's
  choices of college majors.
\newblock \emph{ILR Review}, 48:\penalty0 496--504, 1995.

\bibitem[Card(2001)]{Card:2001}
D.~Card.
\newblock Estimating the return to schooling: progress on some persistent
  econometric problems.
\newblock \emph{Econometrica}, 69:\penalty0 1127--1160, 2001.

\bibitem[Carrell et~al.(2010)Carrell, Page, and West]{CPW:2010}
S.~Carrell, M.~E. Page, and J.~E. West.
\newblock Sex and science: How professor gender perpetuates the gender gap.
\newblock \emph{Quarterly Journal of Economics}, 125:\penalty0 1101--44, 2010.

\bibitem[Chernozhukov et~al.(2013)Chernozhukov, Lee, and Rosen]{CLR:2009}
V.~Chernozhukov, S.~Lee, and A.~Rosen.
\newblock Inference on intersection bounds.
\newblock \emph{Econometrica}, 81:\penalty0 667--737, 2013.

\bibitem[Chernozhukov et~al.(2015)Chernozhukov, Kim, Lee, and Rosen]{CKLR:2013}
V.~Chernozhukov, W.~Kim, S.~Lee, and A.~Rosen.
\newblock Implementing intersection bounds in stata.
\newblock \emph{The Stata Journal}, 15:\penalty0 21--44, 2015.

\bibitem[Chetverikov(2019)]{Chetverikov:2013}
D.~Chetverikov.
\newblock Testing regression monotonicity in econometric models.
\newblock \emph{Econometric Theory}, 35:\penalty0 729--776, 2019.

\bibitem[Chetverikov et~al.(2021)Chetverikov, Wilhelm, and Kim]{CWK:2021}
D.~Chetverikov, D.~Wilhelm, and D.~Kim.
\newblock An adaptive test of stochastic monotonicity.
\newblock \emph{Econometric Theory}, 37:\penalty0 495--536, 2021.

\bibitem[Daymont and Andrisani(1984)]{daymont1984}
T.~N. Daymont and P.~J. Andrisani.
\newblock Job preferences, college major, and the gender gap in earnings.
\newblock \emph{Journal of Human Resources}, pages 408--428, 1984.

\bibitem[Delgado and Escanciano(2012)]{DE:2012}
M.~Delgado and J.~C. Escanciano.
\newblock Distribution-free tests of stochastic monotonicity.
\newblock \emph{Journal of Econometrics}, 170:\penalty0 68--75, 2012.

\bibitem[d'Haultf\oe uille and Maurel(2013)]{HM:2011}
X.~d'Haultf\oe uille and A.~Maurel.
\newblock Inference on an extended {R}oy model, with an application to
  schooling decisions in {F}rance.
\newblock \emph{Journal of Econometrics}, 174:\penalty0 95--106, 2013.

\bibitem[Galichon and Henry(2009)]{GH:2009}
A.~Galichon and M.~Henry.
\newblock A test of non-identifying restrictions and confidence regions for
  partially identified parameters.
\newblock \emph{Journal of Econometrics}, 152:\penalty0 186--196, 2009.

\bibitem[Ghosal et~al.(2000)Ghosal, Sen, and van~der Vaart]{GSvV:2000}
S.~Ghosal, A.~Sen, and A.~van~der Vaart.
\newblock Testing monotonicity of regression.
\newblock \emph{Annals of Statistics}, 28:\penalty0 1054--82, 2000.

\bibitem[Goldin(2021)]{Goldin:2021}
C.~Goldin.
\newblock \emph{Career and Family: Women's Century-Long Journey toward Equity}.
\newblock Princeton University Press, 2021.

\bibitem[Hall and Heckman(2000)]{HH:2000}
P.~Hall and N.~Heckman.
\newblock Testing for monotonicity of a regression mean by calibrating for
  linear functions.
\newblock \emph{Annals of Statistics}, 28:\penalty0 20--39, 2000.

\bibitem[Hansen(2005)]{Hansen:2005}
P.~Hansen.
\newblock A test of superior predictive ability.
\newblock \emph{Journal of Business and Economic Statistics}, 23:\penalty0
  365--380, 2005.

\bibitem[Heckman and Vytlacil(1999)]{HV:99}
J.~Heckman and E.~Vytlacil.
\newblock Local instrumental variables and latent variable models for
  identifying and bounding treatment effects.
\newblock \emph{Proceedings of the National Academy of Sciences}, 96:\penalty0
  4730--4734, 1999.

\bibitem[Heckman et~al.(2016)Heckman, Humphries, and Veramendi]{HHV:2016}
J.~Heckman, J.~Humphries, and G.~Veramendi.
\newblock Dynamic treatment effects.
\newblock \emph{Journal of Econometrics}, 191:\penalty0 276--292, 2016.

\bibitem[Heckman et~al.(2018)Heckman, Humphries, and Veramendi]{HHV:2018}
J.~Heckman, J.~Humphries, and G.~Veramendi.
\newblock Returns to education: the causal effects of education on earnings,
  health, and smoking.
\newblock \emph{Journal of Political Economy}, 126:\penalty0 S197--S246, 2018.

\bibitem[Heublein(2014)]{Heublein:2014}
U.~Heublein.
\newblock Student drop-out from {G}erman higher education institutions.
\newblock \emph{European Journal of Education}, 49:\penalty0 497--513, 2014.

\bibitem[Hsu et~al.(2019)Hsu, Liu, and Shi]{HLS:2016}
Y.-C. Hsu, C.-A. Liu, and X.~Shi.
\newblock Testing generalized regression monotonicity.
\newblock \emph{Econometric Theory}, 35:\penalty0 1146--1200, 2019.

\bibitem[Hunt et~al.(2013)Hunt, Garant, Herman, and Munroe]{HGHM:2013}
J.~Hunt, J.-P. Garant, H.~Herman, and D.~Munroe.
\newblock Why are women underrepresented amongst patentees?
\newblock \emph{Research Policy}, 42:\penalty0 831--843, 2013.

\bibitem[Kahn and Ginther(2017)]{KG:2017}
S.~Kahn and D.~Ginther.
\newblock Women and {STEM}.
\newblock NBER Working Paper No. 23525, 2017.

\bibitem[Kaplan and Schulhofer-Wohl(2018)]{KSW:2018}
G.~Kaplan and S.~Schulhofer-Wohl.
\newblock The changing (dis-)utility of work.
\newblock \emph{Journal of Economic Perspectives}, 32:\penalty0 239--258, 2018.

\bibitem[Kong et~al.(2010)Kong, Linton, and Xia]{KLX:2010}
E.~Kong, O.~Linton, and Y.~Xia.
\newblock Uniform {B}ahadur representation for local polynomial estimates of
  m-regression and its application to the additive model.
\newblock \emph{Econometric Theory}, 26:\penalty0 1529--64, 2010.

\bibitem[Lee and Park(2021)]{LP:2021}
J.~H. Lee and B.~G. Park.
\newblock Nonparametric identification and estimation of the extended {R}oy
  model.
\newblock preprint, 2021.

\bibitem[Lee et~al.(2009)Lee, Linton, and Whang]{LLW:2009}
S.~Lee, O.~Linton, and Y.-J. Whang.
\newblock Testing stochastic monotonicity.
\newblock \emph{Econometrica}, 77:\penalty0 585--602, 2009.

\bibitem[Lee et~al.(2018)Lee, Song, and Whang]{LSW:2018}
S.~Lee, K.~Song, and Y.-J. Whang.
\newblock Testing for a general class of functional inequalities.
\newblock \emph{Econometric Theory}, 34:\penalty0 1018--64, 2018.

\bibitem[Li and Racine(2008)]{LR:2008}
Q.~Li and J.~Racine.
\newblock Nonparametric estimation of conditional {CDF} and quantile functions
  with mixed categorical and continuous data.
\newblock \emph{Journal of Business and Economic Statistics}, 31:\penalty0
  57--65, 2008.

\bibitem[Lippmann and Senik(2018)]{LS:2018}
Q.~Lippmann and C.~Senik.
\newblock Math, girls and socialism.
\newblock \emph{Journal of Comparative Economics}, 46:\penalty0 874--888, 2018.

\bibitem[Manski and Pepper(2000)]{MP:2000}
C.~Manski and J.~Pepper.
\newblock Monotone instrumental variables: with an application to the returns
  to schooling.
\newblock \emph{Econometrica}, 68:\penalty0 997--1010, 2000.

\bibitem[Mas and Pallais(2017)]{MP:2017}
A.~Mas and A.~Pallais.
\newblock Valuing alternative work arrangements.
\newblock \emph{American Economic Review}, 107:\penalty0 3722--3759, 2017.

\bibitem[Mazza and Ophem(2020)]{MO:2020}
J.~Mazza and H.~Ophem.
\newblock Educational choice, initial wage and wage growth.
\newblock Working Paper, 2020.

\bibitem[Mourifi\'e et~al.(2020)Mourifi\'e, Henry, and M\'eango]{MHM:2020}
I.~Mourifi\'e, M.~Henry, and R.~M\'eango.
\newblock Sharp bounds and testability of a {R}oy model of {STEM} major
  choices.
\newblock \emph{Journal of Political Economy}, 128:\penalty0 3220--83, 2020.

\bibitem[Riegle-Crumb and Moore(2014)]{RCM:2014}
C.~Riegle-Crumb and C.~Moore.
\newblock The gender gap in high school physics: considering the context of
  local communities.
\newblock \emph{Social Science Quarterly}, 95:\penalty0 253--268, 2014.

\bibitem[Saltiel(2021)]{Saltiel:2018}
F.~Saltiel.
\newblock Multidimensional skills and gender differences in {STEM} majors.
\newblock unpublished, 2021.

\bibitem[Seo(2018)]{Seo:2018}
J.~Seo.
\newblock Test of stochastic monotonicity with improved power.
\newblock \emph{Journal of Econometrics}, 207:\penalty0 53--70, 2018.

\bibitem[Sloane et~al.(2019)Sloane, Hurst, and Black]{SHB:2019}
C.~Sloane, E.~Hurst, and D.~Black.
\newblock A cross-cohort analysis of human capital specialization and the
  college gender wage gap.
\newblock Becker Friedman Institute working paper number 2019-121, 2019.

\bibitem[Thaler(1980)]{Thaler:80}
R.~Thaler.
\newblock Toward a positive theory of consumer choice.
\newblock \emph{Journal of Economic Behavior and Organization}, 1:\penalty0
  39--60, 1980.

\bibitem[Tversky and Kahneman(1991)]{TK:91}
A.~Tversky and D.~Kahneman.
\newblock Loss aversion in riskless choice: a reference-dependent model.
\newblock \emph{Quarterly Journal of Economics}, 106:\penalty0 1039--1061,
  1991.

\bibitem[Vijverberg(1993)]{Vijverberg:93}
W.~Vijverberg.
\newblock Measuring the unidentified parameter of the extended {R}oy model of
  selectivity.
\newblock \emph{Journal of Econometrics}, 57:\penalty0 69--89, 1993.

\bibitem[Wiswal and Zafar(2015)]{WZ:2015}
M.~Wiswal and B.~Zafar.
\newblock Determinants of college major choice: identification using an
  information experiment.
\newblock \emph{Review of Economic Studies}, 82:\penalty0 791--824, 2015.

\bibitem[Wiswall and Zafar(2018)]{WZ:2018}
M.~Wiswall and B.~Zafar.
\newblock Preference for the workplace, investment in human capital, and
  gender.
\newblock \emph{Quarterly Journal of Economics}, 133:\penalty0 457--507, 2018.

\bibitem[Xie et~al.(2015)Xie, Fang, and Shauman]{XFS:2015}
Y.~Xie, M.~Fang, and K.~Shauman.
\newblock {STEM} education.
\newblock \emph{Annual Review of Sociology}, 41:\penalty0 331--357, 2015.

\bibitem[Zafar(2013)]{zafar2013}
B.~Zafar.
\newblock College major choice and the gender gap.
\newblock \emph{Journal of Human Resources}, 48\penalty0 (3):\penalty0
  545--595, 2013.

\end{thebibliography}
\bibliographystyle{abbrvnat}

\end{document}